\documentclass[12pt]{article} 
\usepackage[sectionbib]{natbib}
\usepackage{array,epsfig,fancyhdr,rotating}
\usepackage[]{hyperref}  
\usepackage{sectsty, secdot}
\sectionfont{\fontsize{12}{14pt plus.8pt minus .6pt}\selectfont}
\usepackage{amsmath}
\numberwithin{equation}{section}
\subsectionfont{\fontsize{12}{14pt plus.8pt minus .6pt}\selectfont}

\usepackage[margin=1.1in]{geometry}

\usepackage{amssymb}
\usepackage{amsfonts}
\usepackage{multirow}
\usepackage{amsthm}
\usepackage{bm}
\usepackage{amsfonts}
\usepackage{autonum}
\usepackage{graphicx}
\usepackage[table]{xcolor}
\usepackage{array}
\usepackage{rotating}
\usepackage{caption}
\usepackage{changepage} 
\usepackage{booktabs}
\usepackage{subfigure}

\newtheorem{theorem}{Theorem}
\newtheorem{lemma}{Lemma}

\theoremstyle{definition}

\newtheorem{remark}{Remark}

\newcommand{\s}{\textbf{s}}

\newcommand{\h}{\textbf{h}}

\newcommand{\tB}{\textbf{t}}
\newcommand{\T}{\bm{T}}

\newcommand{\YB}{\bm{Y}}

\newcommand{\n}{\mathbb{N}}
\newcommand{\nn}{\mathbf{n}}

\newcommand{\hope}[1]{\mathbb{E}\left[#1\right]}
\newcommand{\hopehat}[1]{\hat{\mathbb{E}}[#1]}
\newcommand{\rr}{\mathbb{R}}
\newcommand{\dsumi}{\displaystyle\sum_{i=1}^{\hat{n}}}

\newcommand{\dsum}[3]{\displaystyle\sum_{#1=#2}^{#3}}
\newcommand{\defi}{\;\dot{=}\;}

\newcommand{\SigmaB}{\boldsymbol{\Sigma}}

\newcommand{\betaB}{\bm{\beta}}

\newcommand{\Var}[1]{\text{Var}\left(#1\right)}

\newcommand{\Xbf}{\textbf{X}}

\renewcommand{\thefootnote}{\fnsymbol{footnote}}

\begin{document}

\renewcommand{\baselinestretch}{1.1}



\renewcommand{\thefootnote}{}
$\ $\par


\fontsize{12}{14pt plus.8pt minus .6pt}\selectfont \vspace{0.8pc}
\centerline{\large\bf A SEMIPARAMETRIC AUTORREGRESIVE SPATIAL }
\vspace{2pt} 
\centerline{\large\bf PREDICTION MODEL }
\vspace{.4cm} 
\centerline{Rodrigo García Arancibia$^{1}$, Pamela Llop$^{2}$, Mariel Guadalupe Lovatto$^{2,}$\footnote{Corresponding author. E-mail: marielguadalupelovatto@gmail.com }} 
\vspace{.4cm} 
\noindent\centerline{\small\it $^{1}$Instituto de Economía Aplicada Litoral (IECAL-FCE-UNL),
		 \& CONICET.}
\noindent\centerline{		
		\small\it $^{2}$Facultad de Ingeniería Química, 
		 (FIQ-UNL) \& CONICET.}
\noindent\centerline{\small\it 	
		  Santa Fe, Argentina.}
\vspace{.45cm} \fontsize{9}{11.5pt plus.8pt minus.6pt}\selectfont




%
%
%
%
%


\begin{quotation}
\noindent {\it Abstract:}
{In this paper we propose a semiparametric spatial autoregressive model that combines a linear covariate component with a nonparametrically estimated spatial term, allowing flexible dependence modeling without restrictive covariance structure while preserving interpretability. We establish asymptotic properties, including consistency and asymptotic normality, and evaluate performance through simulations and real data. Results show competitive predictive accuracy relative to geostatistical methods and improved interpretability compared to spatial econometric models.}\\

\vspace{9pt}
\noindent {\it Key words and phrases:}
Spatial dependence; Semiparametric modeling; Spatial autoregressive models; Nonparametric estimation
\par
\end{quotation}\par

\def\thefigure{\arabic{figure}}
\def\thetable{\arabic{table}}

\renewcommand{\theequation}{\thesection.\arabic{equation}}

\fontsize{11}{14pt plus.8pt minus .6pt}\selectfont

\section{Introduction}
Statistical models are often developed under the assumption of independence among observations, which simplifies both model specification and theoretical analysis. In practice, however, dependent structures frequently arise, and spatial data offer a good example, as measurements collected at proximate locations often exhibit correlation due to spatial proximity. Ignoring this dependence may lead to biased inference and reduced predictive accuracy making spatial prediction a central problem in a wide range of fields \citep{kiani2025spatial, cerqueti2025spatially}.

Several methodological frameworks have been proposed to address spatial dependence. In geostatistics, kriging-based approaches explicitly model spatial covariance structures and are widely used for spatial prediction \citep{meng2024kriging,khan2023spatial}. More recently, hybrid methods combining geostatistical techniques with machine learning have been proposed to capture nonlinear relationships while accounting for spatial dependence \citep{Fouedjio2024_GeoML}, with applications in areas such as ecology and agriculture \citep{Shen2024_HybridML_Geo}, spatiotemporal analysis \citep{JeongKoo2025_ISPRS}, and remote sensing \citep{Tadic2024_RemoteSensing_SIF}. In contrast, spatial econometric models focus on the estimation and interpretation of covariate effects by introducing spatial autoregressive structures \citep{vagnini2025regional}. However, these models rely on parametric specifications and on an exogenously defined spatial weights matrix, whose choice may substantially affect the results \citep{duncan2017spatial, bauman2018optimizing}.

Motivated by the compromise between predictive flexibility and interpretability, in a previous work we introduced a nonparametric kriging framework in which kriging weights are estimated using kernel smoothing techniques, avoiding the explicit specification of a parametric covariance function or spatial weights matrix \citep{Lovatto:thesis2022,GarciaArancibia2023NPS}. Building on this approach, we propose a semiparametric spatial autoregressive model for spatial prediction that combines a linear component associated with covariates, allowing direct interpretation of marginal effects, with a spatial autoregressive term estimated through the nonparametric kriging approach.

Along these lines, \cite{montero2012sar,minguez2020alternative} in the spatial econometrics literature and \cite{Jenish2014} for censored spatial data, consider semiparametric autoregressive models for spatial data but only account for some covariates nonparametrically. On the other hand, \cite{Gao2006}, from a nonparametric regression perspective, considers semiparametric autoregressive models for spatial data in which the autoregressive component, given by neighbouring responses, can enter both the parametric and the nonparametric parts of the model; the nonlinear component is further specified through an additive structure and estimated via marginal integration combined with local kernel methods.

{In the context of time series and prediction,  semiparametric autoregressive models are also considered in \cite{Aneiros-Perez2008Nonparametric}. They showed asymptotic normality and find convergence rates for the estimator of the linear parameter and for the nonparametric component based on the predecessors (of the time they wished to predict). For their theory, they assumed the response to take values in a semi-metric space $\mathcal{H}$ with semi-metric $d_{\mathcal{H}}(\cdot,\cdot)$. In a related direction, but in the framework of the spatial data, \cite{DaboNiang2016} proposed a kernel-based nonparametric predictor for spatial data using separate kernels for spatial locations and predictors and established uniform consistency and asymptotic normality under $\alpha$-mixing. Unlike the approach considered here, their model corresponds to a classical nonparametric regression framework and does not include spatial autoregressive terms.} We have found these papers useful and combining their approaches allows us to develop our theory. 

The main contributions of this paper can be summarized in three points. First, we introduce a semiparametric spatial autoregressive framework that avoids the explicit parametric specification of the spatial dependence structure. Second, we develop the theoretical properties of the resulting estimators, establishing asymptotic normality and convergence rates for the parametric component together with convergence results for the nonparametric spatial term. These results contribute to the theoretical understanding of estimation in spatial semiparametric models. Third, we evaluate the empirical performance of the proposed method through simulation studies and real data applications, comparing its predictive accuracy with that of widely used spatial autoregressive models.

The remainder of the paper is organized as follows. Section 2 introduces the proposed semiparametric spatial autoregressive model and describes the estimation procedure. Section 3 presents the main theoretical results, including asymptotic properties of the estimators. Section 4 reports results from simulation studies and illustrates the proposed approach using real data applications.

\section{Model and estimators}

This section presents the semiparametric spatial autoregressive model and the corresponding estimation framework. We first describe the model and then introduce the notation and define the estimators.

\subsection{Semiparametric spatial autoregressive model}
The variables of interest are observed at sites \(\s = (i_1, \dots, i_d) \in \mathbb{Z}^d\), which belong to a grid defined by
\[
\mathcal{D}_{\nn} = \left\{ \s : 1 \le i_l \le n_l,\; l = 1, \ldots, d \right\},
\]
where \(n_l\) denotes the number of points along the \(l\)-th direction, \(\textbf{n} = (n_1, \ldots, n_d)\), and we define \(\hat{n} \defi n_1 \times \cdots \times n_d\) as the sample size. For simplicity, we assume that \(n = n_1 = \cdots = n_d\) and that \(\hat{n} \to \infty\) as \(n \to \infty\). If the point where we want to predict \(\s_0 \in \mathcal{D}_{\nn}\) is \(\mathcal{O}_{\textbf{n}} = \mathcal{D}_{\nn} \setminus \{\s_0\}\), the grid of observed points; otherwise \(\mathcal{O}_{\textbf{n}} = \mathcal{D}_{\nn}\).

Let $\{(Y_i, \bm{X}_{i}^T, \bm{T}_i^T)\}_{\s_i \in \mathcal{O}_{\nn} }$ be a sample of random vectors defined on $(\Omega, \mathcal{F}, \mathbb{P})$, where $\mathcal{O}_{\nn} \subset \mathbb{Z}^d$. Define $ Y_i\doteq  Y(\s_i)\in \rr$, $ \bm{X}_{i}\doteq \bm{X}(\s_i)\in \rr^p$, and $\bm{T}_{i} \doteq \bm{T}(\s_i)  \in \rr^k$. Moreover, 
\(\label{Ti}
\bm{T}_i =(Y_{i(1)}, \dots, Y_{i(k)})^T,
\)
where $\{\s_{i(1)}, \dots,\s_{i(k)}\}$ denote the $k$ nearest neighbors of site $\s_i$. We call $\bm{T}_i$ the neighborhood vector, that is, the vector of values of the response variable measured at the $k$ nearest neighbors of $\s_i$.  To simplify notation, we henceforth omit the dependence on \(\nn\) and write \(\{(Y_i, \bm{X}_i^T, \bm{T}_i^T)\}\) for the sample of random vectors observed at sites in \(\mathcal{O}_{\nn}\).

In this context, the proposed semiparametric autoregressive model is given by 
\begin{equation}\label{modeloespacial}
	Y_i = \bm{X}^T_{i}\betaB + r(\bm{T}_i) + \varepsilon_i, 
	\qquad \forall \; \s_i \in  \mathcal{D}_{\textbf{n}},
\end{equation}
where $\bm{\beta} = (\beta_1, \dots, \beta_p)^T \in \rr^p$, $r : \rr^k \to \rr$ is an unknown smooth function, and the errors satisfy
\[\label{modeloespacialhope}
	\hope{\varepsilon_i \mid \bm{X}_i, \bm{T}_i} = 0.
\]
Model \eqref{modeloespacial} combines a parametric component associated with the covariates and a nonparametric component capturing spatial dependence through the neighbourhood vector.

\subsection{Notation and definition of estimators}
The conditional expectation is
\begin{align}
	\hope{Y_i \mid \bm{T}_{i}}  
    \label{poblacional}
	&= \hope{\bm{X}^T_{i} \mid \bm{T}_{i}} \bm{\beta} + r(\bm{T}_{i}),
\end{align}
since $\hope{\varepsilon_i \mid \bm{X}_i, \bm{T}_i} = 0$ implies $\hope{\varepsilon_i \mid \bm{T}_{i}} = 0$. Following \cite{robinson1988}, subtracting \eqref{poblacional} from \eqref{modeloespacial} yields
\begin{equation}\label{modeloresta}
	Y_i - \hope{Y_i \mid \bm{T}_{i}}  
	= (\bm{X}_i - \hope{\bm{X}_{i} \mid \bm{T}_{i}})^T \bm{\beta} + \varepsilon_i.
\end{equation}

Let $\tB \defi \T(\s)$. For fixed $\h$, define
\begin{equation}\label{hopeY}
	\hopehat{Y_i \mid \bm{T}_{i} = \tB} 
	= \sum_{\substack{\s_k \in \mathcal{O}_\textbf{n} \\ \s_k \neq \s }}
	\omega_{\h}(\tB, \bm{T}_{k}) \, Y_{k},
\end{equation}
and
\begin{equation}\label{hopeX}
	\hopehat{X_{ij} \mid \bm{T}_{i} = \tB} 
	= \sum_{\substack{\s_k \in \mathcal{O}_\textbf{n} \\ \s_k \neq \s }}
	\omega_{\h}(\tB, \bm{T}_{k}) \, X_{kj},
\end{equation}
The Nadaraya–Watson kernel regression function estimator.


Define the matrix $\bm{W}_{\h} \in \rr^{\hat{n}\times \hat{n}}$ with entries $(\bm{W}_{\h})_{ij} = \omega_{\h}(\bm{T}_i,\bm{T}_{j})$. Then \eqref{hopeY} can be written in matrix form as
\begin{equation}\label{hopeYM}
	\hopehat{{Y}_i \mid \bm{T}_{i}} = (\bm{W}_{\h} \bm{Y})_{i},
\end{equation}
where $(\bm{W}_{\h} \bm{Y})_{i}$ denotes the $i$-th component of $\bm{W}_{\h} \bm{Y} \in \rr^{\hat{n}}$. Similarly, \eqref{hopeX} becomes
\begin{equation}\label{hopeXM}
	\hopehat{X_{ij} \mid \bm{T}_{i}} = (\bm{W}_{\h}\mathbb{X})_{ij},
\end{equation}
where $(\bm{W}_{\h}\mathbb{X})_{ij}$ denotes the $(i,j)$-th entry of $\bm{W}_{\h}\mathbb{X} \in \rr^{\hat{n} \times p}$ and ${\mathbb{X}} \doteq ({\bm{X}}_1, \dots, {\bm{X}}_{\hat{n}})^T \in \rr^{\hat{n}\times p}$ the matix of covariates.

Based on these estimators, define
$
	\tilde{Y}_i \doteq Y_i - \hopehat{Y_i \mid \bm{T}_{i}} = Y_i - (\bm{W}_{\h} \bm{Y})_{i},
$
and, letting $(\bm{W}_{\h}\mathbb{X})_{i\cdot}$ denote the $i$-th row of $\bm{W}_{\h}\mathbb{X}$,
\begin{equation}\label{xp}
	\tilde{\bm{X}}_i \doteq \bm{X}_i - \hopehat{\bm{X}_{i} \mid \bm{T}_{i}} 
	= \bm{X}_i - (\bm{W}_{\h}\mathbb{X})_{i\cdot}.
\end{equation}
Hence, from \eqref{modeloresta},
\[\label{yp}
	\tilde{Y}_i = \tilde{\bm{X}}_i^T \bm{\beta} + \varepsilon_i.
\]
If $\tilde{\bm{Y}} \doteq (\tilde{Y}_1, \dots, \tilde{Y}_{\hat{n}})^T \in \rr^{\hat{n}}$ and 
$\tilde{\mathbb{X}} \doteq (\tilde{\bm{X}}_1, \dots, \tilde{\bm{X}}_{\hat{n}})^T \in \rr^{\hat{n}\times p}$ and if $\tilde{\mathbb{X}}^T\tilde{\mathbb{X}}$ is invertible, the least squares estimator of $\bm{\beta}$ is
\begin{equation}\label{betahat}
	\hat{\bm{\beta}}_{\h} = (\tilde{\mathbb{X}}^T\tilde{\mathbb{X}})^{-1} \tilde{\mathbb{X}}^T \tilde{\bm{Y}}.
\end{equation}
\begin{remark}
The estimation procedure removes from both the response and the covariates the component explained by $\bm{T}_i$. Consequently, the coefficients $\bm{\beta}$ capture the linear relationship between $\bm{X}_i$ and $Y_i$ that is not explained by the local spatial information. In this sense, $\bm{\beta}$ measures the effect of $\bm{X}_i$ on $Y_i$ after accounting for the spatial dependence induced by observations at neighboring locations.
\end{remark}
We next estimate the function $r$. From Equation \eqref{poblacional} we have that, 
\begin{equation}\label{defrhat}
	\hat{r}_\h(\bm{T}_i) 
	= \hopehat{Y_i \mid \bm{T}_{i}}  
	- \hopehat{\bm{X}_{i} \mid \bm{T}_{i}} \, \hat{\bm{\beta}}_\h.
\end{equation}
Substituting \eqref{hopeYM}, \eqref{hopeXM}, and \eqref{betahat} into \eqref{defrhat}, we obtain
\begin{align}
	\hat{r}_\h(\bm{T}_i) 
	&= (\bm{W}_{\h} \bm{Y})_{i} 
	- (\bm{W}_{\h}\mathbb{X})_{i\cdot} \, \hat{\bm{\beta}}_\h \notag \\
	&= (\bm{W}_{\h})_{i\cdot} \bm{Y} 
	- (\bm{W}_{\h})_{i\cdot} \mathbb{X} \, \hat{\bm{\beta}}_\h \notag \\ \label{mhat}
	&= (\bm{W}_{\h})_{i\cdot} \bigl(\bm{Y} - \mathbb{X}\hat{\bm{\beta}}_\h \bigr).
\end{align}

Finally, under the proposed model, the prediction of the response $Y_0$ at location $\s_0 \in \rr^d$ is given by
\begin{align}
	\hat{Y}_0 
	&= \hopehat{Y_0 \mid \bm{X}_{0},\bm{T}_{0}} \notag \\
	&= \bm{X}_{0}^T \hat{\bm{\beta}}_\h + \hat{r}(\bm{T}_{0}) \label{BLUPnp}\\
	&= \bm{X}_{0}^T \hat{\bm{\beta}}_\h 
	+ (\bm{W}_{\h})_{0\cdot} \bigl(\bm{Y} - \mathbb{X}\hat{\bm{\beta}}_\h \bigr).
\end{align}
Here, $\bm{X}_{0} \in \mathbb{R}^{p}$ denotes the vector of covariates observed at $\s_0$, and $\bm{T}_0 \in \mathbb{R}^{k}$ denotes the neighborhood vector associated with $\s_0$.

To estimate the parameter vector $\h=\{h_{1\nn}, h_{2\nn}, k\}$, that is, the smoothing parameters and the number of nearest neighbours $k$, we employ a $k$-fold cross-validation procedure, which is described in detail in the following section.

\section{Assumptions and main results}
To estimate the conditional expectations \eqref{hopeY} and \eqref{hopeX}, we propose Nadaraya--Watson type estimators,  constructed as in \cite{GarciaArancibia2023NPS}. In particular,
\begin{equation}\label{pesos0}
	\omega_{\h}(\tB,\bm{T}_{k}) \doteq
	\frac{K_{1, h_{1\nn}}\big( d(\s, \s_k) \big) 
	K_2\!\left(\frac{d_m(\tB,\bm{T}_k)}{h_{2\nn}}\right)}
	{\sum_{\substack{\s_j \in \mathcal{O}_\textbf{n} \\ \s_j \neq \s }}
	K_{1, h_{1\nn}}\big( d(\s, \s_j) \big)
	K_2\!\left(\frac{d_m(\tB,\bm{T}_j)}{h_{2\nn}}\right)}.
\end{equation}
Here, $d$ is the Euclidean distance,  $
		d_{m}(\tB, \bm{T}_k) \doteq |Med(\tB)-Med(\bm{T}_i)|,
	$ is a similarity measure where $Med(\tB)$ the median of $\tB$. In adittion $\h = \{h_{1\nn}, h_{2\nn}, k\}$ and we also set $\omega_{\h}(\tB,\bm{T}_{k}) = 0$ whenever $\s = \s_k$. Moreover \(K_1\) and \(K_2\) are kernels defined on \(\mathbb{R}\), and \(h_{1\nn}\), \(h_{2\nn}\) are bandwidth sequences tending to zero, such that
\(
\hat{n}\, h_{1\nn}^d\, h_{2\nn}^k \to \infty.
\)
\begin{remark}
    In contrast to \cite{Gao2006} approaches that mitigate the curse of dimensionality by imposing an additive structure on the neighborhood effect, the proposed methodology avoids such separability assumptions by relying instead on a dimension-reduction strategy based on summary measures of the neighborhood vector. Moreover, we added a kernel function for euclidean distance like \cite{DaboNiang2016}, to balance the loss of spatial structure when using the median. 
\end{remark}
Thus, we define
\[
K_{1, h_{1\nn}} \big( d(\s, \s_k) \big)
=
K_1\!\left(\frac{d\!\left( \frac{\s}{\nn}, \frac{\s_k}{\nn}\right)}{h_{1\nn}} \right),
\qquad 
\frac{\s}{\nn} = \left(\frac{i_1}{n}, \ldots, \frac{i_d}{n}\right).
\]

For each fixed site \(\s\), define the neighborhood size as
\[
k_\nn = \sum_{\s_i \in \mathcal{D}_\nn} \mathbb{I}_{\{ d(\s, \s_i) \le r_\nn \}},
\]
where \(r_\nn > 0\) is a growing neighborhood radius. The quantity \(k_\nn\) represents the number of locations involved in the local prediction around \(\s\).

Then, the condition
\(
{d\!\left( \frac{\s}{\nn}, \frac{\s_i}{\nn} \right)}/{h_{1\nn}} \le 1,
\)
is equivalent to \(d(\s, \s_i) \le r_\nn\), with
\(
r_\nn = n\, h_{1\nn}.
\)
We assume that
\(
r_\nn \to \infty,
\) so that
\((n h_{1\nn})^d = \hat{n} h_{1\nn}^d \to \infty,
\)
ensuring that the effective number of observations used in the local prediction diverges.

From a geometric viewpoint, on a regular lattice in \(\mathbb{Z}^d\), the number of points inside a Euclidean ball \(B(r)\) of radius \(r\) satisfies $
\#\big(\mathbb{Z}^d \cap B(r)\big) = \operatorname{Vol}_d(B(r)) + O(r^{d-1}), $ which implies
\(
k_\nn = O(r_\nn^d).
\)
Furthermore, we assume the asymptotic expansion
\begin{equation}\label{expansionk_n}
k_\nn = C_d\, r_\nn^d + O(r_\nn^\beta),
\qquad 0 < \beta < d,
\end{equation}
where \(C_d\) is the volume of the unit ball in \(\mathbb{R}^d\).

These conditions capture the bias--variance trade-off: \(k_\nn \to \infty\) reduces variance, while \(h_{1\nn} \to 0\) controls the spatial smoothing bias.

\subsection{Technical assumptions}
The results of the proposed predictor are achieved under the following assumptions
on the regression and densities, the kernels, bandwidths and local dependence condition. Let $\bm{T} \in \mathcal{C}$, where $\mathcal{C}$ is a compact subset of $\mathbb{R}^k$ and

\begin{enumerate}
    \item \label{K} The kernel functions $K_1$ and $K_2$ have support on $[0,1]$ and are Lipschitz continuous on $[0,\infty)$. Moreover, there exist constants $k_i>0$, $i=1,2$, such that for all $u\in[0,1]$,
\(
-K_i'(u)>k_i>0,
\)
and $K_i(u)>0$.

\item \label{fs} The marginal density $f$ of $\bm{T}$ with respect to the Lebesgue measure on $\rr^k$ exists, is Lipschitz, and satisfies
\(
\inf_{\tB\in \mathcal{C}} f(\tB) \ge \delta > 0.
\)
The joint density $f_{\bm{T}_i,\bm{T}_j}$ of $(\bm{T}_i,\bm{T}_j)$ with respect to the Lebesgue measure on $\rr^{2k}$ exists and is bounded. Moreover, there exists a constant $C>0$ such that
\(\big|f_{\bm{T}_i,\bm{T}_j}(\tB_1,\tB_2)-f_{\bm{T}_i}(\tB_1)f_{\bm{T}_j}(\tB_2)\big|<C,
\)
for all $\tB_1,\tB_2$ and $\s_i\neq\s_j$.

\item \label{fs2}
\begin{itemize}
\item[(i)] The conditional densities $f_{\T_i, \T_j \mid Y_i, Y_j}$ of $(\T_i,\T_j)$ given $(Y_i,Y_j)$ and 
$f_{\T_i \mid Y_i}$ of $\T_i$ given $Y_i$ exist and are uniformly bounded. That is, there exists a constant $C>0$ such that
\(
f_{\T_i, \T_j \mid Y_i, Y_j}(\tB_1,\tB_2 \mid y_1,y_2) \le C,
\) and
\(f_{\T_i \mid Y_i}(\tB_1 \mid y_1) \le C,
\)
for all $y_1,y_2,\tB_1,\tB_2,\s_i,\s_j$.

\item[(ii)] $\sup_i \hope{|Y_i|^r} < \infty$ and 
$\sup_{\bm{q}} \int |y|^r f_{\T_i,Y_i}(\bm{q},y)\,dy < C$
for some $r>4$ and for all $\s_i$.
Let $g_2(\bm{q}) = \operatorname{Var}(Y_i \mid \T_i=\bm{q})$, which does not depend on $\s_i$ and is continuous in a neighborhood of $\tB$, so that
\(
\sup_{\bm{q}:\|\bm{q}-\tB\|\le h} |g_2(\bm{q})-g_2(\tB)| = o(1)
\quad \text{as } h\to 0.
\)
Let $m(\cdot)$ denote a purely nonparametric regression function (in contrast with the semiparametric regression function considered in our model), and define
\(
g_r(\bm{q}) = \hope{|Y_i - m(\tB)|^r \mid \T_i=\bm{q}},
\)
which is independent of $\s_i$ and continuous in a neighborhood of $\tB$. 
In addition,
\(
g(\bm{q}_1,\bm{q}_2,\tB)
=
\hope{(Y_i-m(\tB))(Y_j-m(\tB)) \mid \T_i=\bm{q}_1,\T_j=\bm{q}_2},
\)
for all $\s_i \ne \s_j$, does not depend on $\s_i,\s_j$ and is continuous in a neighborhood of $(\tB,\tB)$.
\end{itemize}

Analogous conditions (i)–(ii) are assumed to hold for each component of the vector $\bm{X}_i$ in place of $Y_i$.

\item \label{ryg} 
Let $g_j(\tB) = \hope{X_{ij} \mid \T_i=\tB}$ and 
$r(\tB) = \hope{Y_i - \bm{X}_{i}^T\betaB \mid \T_i=\tB}$ 
for $1 \le i \le \hat{n}$ and $1 \le j \le p$. 
Assume that all functions to be estimated are Lipschitz continuous. 
That is, for all $(\tB_1,\tB_2) \in \mathcal{C}\times\mathcal{C}$ and for all $h \in \{r, g_1, \dots, g_p\}$, 
\(
|h(\tB_1)-h(\tB_2)| \le C \|\tB_1-\tB_2\|,
\)
for some constant $C<\infty$.

\item \label{mixing} We assume that \( \{(Y_i, \Xbf^T_i, \bm{T}^T_i)\}_{i=1}^{\hat{n}}\) is jointly strictly stationary and generated by a strongly mixing process whose mixing coefficient \( \alpha(r) \) satisfies $\alpha(r) \leq C r^{-b},
\quad \text{for some } b > 4.5d.$ 
\item \label{epsilonEta} Let \( \bm{\eta}_i \doteq (\eta_{i1}, \dots, \eta_{ip})^T \), where \( \eta_{ij} = X_{ij} - \mathbb{E}(X_{ij} \mid \T_i) \) for \( j = 1, \dots, p \). We assume that, for each \( i = 1, \dots, n \),
\(
\bm{\eta}_i \) is independent of  \(\varepsilon_i .
\)

\item \label{covas} Let
\(
\mathbf{V}_{\bm{\varepsilon}} = \mathbb{E}(\bm{\varepsilon} \bm{\varepsilon}^T)\), {with} \(
\bm{\varepsilon}^T = (\varepsilon_1, \dots, \varepsilon_{\hat{n}})
\) {and} \(
\bm{\eta}^T = (\bm{\eta}_1, \dots, \bm{\eta}_{\hat{n}}) \in \rr^{p \times \hat{n}} .
\)
We assume that
\(
\mathbf{B} = \hope{\bm{\eta}_1 \bm{\eta}_1^T}
\) {and} \(
\mathbf{C} = \lim_{\hat{n} \to \infty} \hat{n}^{-1} \hope{\bm{\eta}^T \mathbf{V}_\varepsilon\bm{\eta}}
\)
are positive definite matrices.

\end{enumerate}
\begin{remark}\label{hipAnne}
   Assumptions H1--H5 of Theorem 3.1 in \cite{DaboNiang2016} are directly ensured by assumptions A2--A5 of this work. In particular, assumption A5, together with the conditions imposed on the smoothing parameters (as stated in the results), allows us to verify assumptions H6 and H7 of that theorem. Moreover, assumption H8, as well as the condition required to apply Remark 4 in the cited work, are guaranteed by the strict stationarity assumption in A6. Finally, the compactness of the set \(\mathcal{C}\) enables us to employ a similarity measure based on differences between medians, thereby preserving the results obtained by the authors.
\end{remark}

\begin{remark}
    All the assumptions are relatively mild in this type of problem and can be justified in detail. For example, assumptions 1 y 2  are quite natural and corresponds to the used in the non-spatial case. Assumptions 3(i) and 3(ii) are required to establish asymptotic properties in the pure nonparametric setting via \cite{DaboNiang2016} and the existence of moments of higher than second order is required for this kind of problem when uniform convergence for nonparametric regression estimation is involved.     
    As shown in Theorem \ref{teo1}   below, condition of independence in Assumption 6, the existence of the limit defining the matrix C, and the positive definiteness of both B and C are required for the formulation of the theorem.
\end{remark}
\subsection{Main results}
We can now state the asymptotic properties of the parametric estimator, including its asymptotic normality and a law of the iterated logarithm.
\begin{theorem}\label{teo1}
Under assumptions 1--7, and additionally, as $\hat{n} \to \infty$, 
\[\label{H0}
\hat{n}h_{2\hat{n}}^{4} \to 0 \qquad \text{ and } \qquad \dfrac{\log \hat{n}}{h_{1\hat{n}}^{d}h_{2\hat{n}}^{k} \hat{n}^{\frac{1}{4}-\frac{1}{r}}} \to 0,
\] it follows that
\begin{equation}\label{normalidadbeta}
\sqrt{\hat{n}}\left(\hat{\boldsymbol{\beta}}_{\mathbf{h}} - \boldsymbol{\beta}\right)
\overset{d}{\longrightarrow} \mathcal{N}(0, \mathbf{A}), 
\quad \text{where } \mathbf{A} = \mathbf{B}^{-1}\mathbf{C}\mathbf{B}^{-1}.
\end{equation}
Moreover,
\begin{equation}\label{convergenciabeta}
\limsup_{\hat{n}\to \infty}
\left(\frac{\hat{n}}{2\log \log \hat{n}}\right)^{\frac{1}{2}}
\left|\hat{\boldsymbol{\beta}}_{\mathbf{h},j}-\beta_j\right|
= a_{jj}^{\frac{1}{2}} \quad \text{a.s.,}
\end{equation}
where $a_{jj} = A_{jj}$.
\end{theorem}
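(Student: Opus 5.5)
The proof follows the Robinson-type decomposition used by \cite{Aneiros-Perez2008Nonparametric}. We treat the nonparametric smoothing errors as asymptotically negligible perturbations of an infeasible least squares estimator built from the true residuals $\bm{\eta}_i$.

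Write $g(\T_i)=(g_1(\T_i),\dots,g_p(\T_i))^T$ and define the smoothing errors
\[
\bar g_j(\T_i)=g_j(\T_i)-(\bm{W}_{\h}\mathbb{X})_{ij}, \qquad \bar r(\T_i)=r(\T_i)-(\bm{W}_{\h}(\bm{Y}-\mathbb{X}\betaB))_i .
\]
From \eqref{modeloespacial} we then have $\tilde{\bm{X}}_i=\bm{\eta}_i+\bar{\bm{g}}(\T_i)$ and $\tilde Y_i=\tilde{\bm{X}}_i^T\betaB+\bar r(\T_i)+\varepsilon_i-(\bm{W}_{\h}\bm{\varepsilon})_i$. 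Substituting into \eqref{betahat} gives
\[
\sqrt{\hat n}(\hat{\betaB}_{\h}-\betaB)=\Big(\tfrac1{\hat n}\tilde{\mathbb{X}}^T\tilde{\mathbb{X}}\Big)^{-1}\tfrac{1}{\sqrt{\hat n}}\tilde{\mathbb{X}}^T\big(\bm{\varepsilon}-\bm{W}_{\h}\bm{\varepsilon}+\bar{\bm r}\big).
\]
The proof then has three steps.

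\textbf{Step 1 (uniform rates).} Invoke the uniform consistency result for the kernel predictor of \cite{DaboNiang2016}, whose hypotheses hold by Remark \ref{hipAnne} under A2--A5. We apply it to each regression $g_j$ and to $r$. Under the Lipschitz conditions A4 and the kernel condition A1 this yields
\[
\sup_{\tB\in\mathcal{C}}|\bar g_j(\tB)|+\sup_{\tB\in\mathcal{C}}|\bar r(\tB)| = O(h_{1\nn}+h_{2\nn})+O\Big(\sqrt{\log\hat n/(\hat n h_{1\nn}^d h_{2\nn}^k)}\Big)\quad\text{a.s.}
\]
The truncation argument that uses the moments $r>4$ of A3(ii) is exactly what produces the second condition in \eqref{H0}.

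\textbf{Step 2 (the denominator).} Show that $\hat n^{-1}\tilde{\mathbb{X}}^T\tilde{\mathbb{X}}\to\mathbf{B}$ a.s. For the main part, $\hat n^{-1}\sum\bm{\eta}_i\bm{\eta}_i^T\to\mathbf{B}$ by the ergodic theorem or the strong law for strongly mixing stationary fields, using A5. The cross terms and the $\bar{\bm g}\bar{\bm g}^T$ term are bounded by the Step 1 rate times $O(1)$ averages of $|\eta_{ij}|$, so they vanish. Together with positive definiteness of $\mathbf{B}$ (A7), this lets us replace the inverse by $\mathbf{B}^{-1}$.

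\textbf{Step 3 (the numerator).} Expand the numerator into the leading term $\hat n^{-1/2}\sum\bm{\eta}_i\varepsilon_i$ plus remainders. The remainders are
\[
\hat n^{-1/2}\sum\bar{\bm g}(\T_i)\varepsilon_i,\quad \hat n^{-1/2}\sum\tilde{\bm{X}}_i(\bm{W}_{\h}\bm\varepsilon)_i,\quad \hat n^{-1/2}\sum\bm{\eta}_i\bar r(\T_i),\quad \hat n^{-1/2}\sum\bar{\bm g}(\T_i)\bar r(\T_i).
\]
The last product term is $O(\sqrt{\hat n}(h^2+\log\hat n/(\hat n h_1^dh_2^k)))$. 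We intend the condition $\hat n h_{2}^4\to0$ to control the bias part and the second condition in \eqref{H0} to control the variance part, possibly after arguing that the spatial bias is of order $h_2^2$ on the lattice. The other remainders are centered sums of $\bm{\eta}_i$ or $\varepsilon_i$ weighted by smoothing errors. For these we bound the second moments through covariance inequalities for $\alpha$-mixing fields (Davydov) with $b>4.5d$, and use the independence of $\bm{\eta}_i$ and $\varepsilon_i$ from A6. This gives $o_P(1)$, and $o(1)$ a.s. for the iterated logarithm.

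The leading term is then handled by the central limit theorem for strongly mixing random fields (Bolthausen, or Guyon's version with moments $>2$ and summable $\alpha$), applied to the stationary field $\bm{\eta}_i\varepsilon_i$. This has mean zero by A6 and asymptotic variance $\mathbf{C}$ from A7, and Slutsky then gives \eqref{normalidadbeta}. For \eqref{convergenciabeta} we apply, componentwise, a law of the iterated logarithm for mixing random fields under the same polynomial mixing rate to $\hat n^{-1}\sum(\mathbf{B}^{-1}\bm{\eta}_i\varepsilon_i)_j$, whose variance is $a_{jj}$. For this we must show that the remainders are $o(\sqrt{\log\log\hat n/\hat n})$ almost surely, not just in probability.

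The main obstacle is Step 3. The term $\hat n^{-1/2}\sum\tilde{\bm X}_i(\bm{W}_{\h}\bm\varepsilon)_i$ involves weights that depend on the $\T$'s, and the $\T$'s are themselves built from responses, so the $\varepsilon$'s are not independent of the weights. I would first condition on the $\T$-field, write the term as a double sum $\sum_{i,k}\omega_{\h}(\T_i,\T_k)\bm{\eta}_i\varepsilon_k$, and split it into near pairs, with $d(\s_i,\s_k)\le r_\nn$, and far pairs. Near pairs would be handled by counting ($k_\nn=O(r_\nn^d)$ by \eqref{expansionk_n}), and far pairs by the mixing decay. Upgrading these bounds to almost-sure statements, for example by Borel--Cantelli over a geometric subsequence, is the delicate part.
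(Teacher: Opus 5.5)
Your skeleton matches the paper's: the Robinson decomposition $(\hat{n}^{-1}\tilde{\mathbb{X}}^T\tilde{\mathbb{X}})^{-1}\hat{n}^{-1/2}(\bm{S}_{n1}-\bm{S}_{n2}+\bm{S}_{n3})$, a strong law plus Cauchy--Schwarz for the denominator, Bolthausen's CLT for $\bm{\eta}_i\varepsilon_i$ with Cram\'er--Wold and Slutsky, and a mixing LIL. The gap is in Step 3, which is where all the work lies, and the mechanism you propose there would not go through.

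\textbf{Conditioning fails.} Each $\T_i$ is a vector of neighbouring responses, so the $\sigma$-field of the $\T$-field contains (essentially) every $Y_k$. Conditionally on it, $\varepsilon_k=Y_k-\bm{X}_k^T\betaB-r(\T_k)$ is no longer centred, so covariance inequalities have nothing to exploit. A6 does not help, since it only makes $\bm{\eta}_i$ and $\varepsilon_i$ independent at the same site.

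\textbf{Davydov does not apply directly.} Unconditionally, the summands are not local functionals of the field: the denominator of $\omega_{\h}$ involves all $\T_j$ within spatial distance $r_{\nn}\to\infty$. Your near/far splitting therefore does not separate them.

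\textbf{No almost-sure control.} A second-moment bound gives only $o_P(1)$, with no polynomial rate under \eqref{H0}. That would be enough for \eqref{normalidadbeta}, but Borel--Cantelli cannot then deliver the almost-sure remainder bounds that \eqref{convergenciabeta} needs. Since every weight changes with $\hat{n}$ through the bandwidths, there is also no maximal inequality to interpolate between subsequence points.

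\textbf{What the paper uses instead.} The missing ingredient is an almost-sure maximal bound for kernel-weighted sums of mixing variables. The paper takes Lemma~\ref{lema4}, $\max_{i,k}|\omega_{\h}(\T_i,\T_k)|=O((\hat{n}h_{1\nn}^dh_{2\nn}^k)^{-1})$ a.s., and feeds it into Lemma~3 of \cite{Aneiros-Perez2008Nonparametric}. It checks that lemma's hypotheses via Lemmas~\ref{lemamixing}--\ref{lemaultimo} and $b>4.5d$, and applies it directly with the random weights, without conditioning. (The paper does not discuss the weight--error dependence you raise any further.) This gives $\max_i|(\bm{W}_{\h}\bm{\varepsilon})_i|$ and $\max_i|\sum_k\omega_{\h}(\T_i,\T_k)\eta_{kj}|$ of order $(h_{1\nn}^dh_{2\nn}^k)^{-1}\hat{n}^{-1/2+1/r}\log\hat{n}$ a.s.

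Each of the nine remainder pieces is then either $\sqrt{\hat{n}}$ times a product of two such maxima, or a second application of the same lemma. The identity $\sqrt{\hat{n}}\,[(h_{1\nn}^dh_{2\nn}^k)^{-1}\hat{n}^{-1/2+1/r}\log\hat{n}]^2=[\log\hat{n}/(h_{1\nn}^dh_{2\nn}^k\hat{n}^{1/4-1/r})]^2$ is exactly where the second condition in \eqref{H0} comes from. It does not come from a truncation in your Step 1, whose rate does not involve $r$ at all. Because every bound is almost sure, one gets $\bm{S}_{n1}-\bm{S}_{n2}+\bm{S}_{n3}=\sum_i\bm{\eta}_i\varepsilon_i+o(\hat{n}^{1/2})$ a.s., which serves both the CLT and the LIL.

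\textbf{Two smaller corrections.} First, drop the $O(h_{1\nn})$ bias in Step 1. By stationarity (A5), $g_j$ and $r$ do not depend on the site, so $K_1$ affects only the variance. Keeping that term would require $\hat{n}h_{1\nn}^4\to0$, which is not assumed. With bias $O(h_{2\nn})$ the product remainder is exactly $(\hat{n}h_{2\nn}^4)^{1/2}\to0$, and no $h_{2\nn}^2$-bias argument is needed. Second, your $\bar r$ already contains $-(\bm{W}_{\h}\bm{\varepsilon})_i$, so the correct identity is $\tilde Y_i=\tilde{\bm{X}}_i^T\betaB+\bar r(\T_i)+\varepsilon_i$. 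To isolate the $\bm{W}_{\h}\bm{\varepsilon}$ term, use $\tilde{\bm r}=\bm r-\bm{W}_{\h}\bm r$ as the paper does.
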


Next we state the result for the nonparametric component.
\begin{theorem}\label{r}
 Under assumptions 1--7, and additionally, as $\hat{n} \to \infty$,
\(
\hat{n}h_{2\hat{n}}^{4} \to 0 \) and \( \frac{\log \hat{n}}{h_{1\hat{n}}^{d}h_{2\hat{n}}^{k} \hat{n}^{\frac{1}{4}-\frac{1}{r}}} \to 0,
\)
we have
\begin{equation}\label{convergenciaM}
\sup_{\boldsymbol{t} \in \mathcal{C}} \left| \hat{r}(\boldsymbol{t}) - r(\boldsymbol{t}) \right|
= O(h_{2\hat{n}}) + O\left( \sqrt{\frac{\log \hat{n}}{\hat{n} \, h_{1\hat{n}}^{d} h_{2\hat{n}}^{k}}} \right)
\quad \text{a.s.}
\end{equation}
\end{theorem}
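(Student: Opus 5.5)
We split $\hat r - r$ into a purely nonparametric error and a parametric correction, in the spirit of \cite{Aneiros-Perez2008Nonparametric}. By \eqref{mhat}, for any $\tB\in\mathcal{C}$ we have $\hat r_\h(\tB) = \hopehat{Y\mid \T=\tB} - \hopehat{\bm{X}\mid\T=\tB}^T\hat{\betaB}_\h$. Moreover, by \eqref{poblacional} and assumption~\ref{ryg}, $r(\tB) = m_Y(\tB) - \bm{g}(\tB)^T\betaB$, where $m_Y(\tB)=\hope{Y_i\mid\T_i=\tB}$ and $\bm{g}=(g_1,\dots,g_p)^T$. Adding and subtracting $\hopehat{\bm{X}\mid\T=\tB}^T\betaB$ gives
\begin{equation}
\hat r_\h(\tB)-r(\tB) = \bigl[\hopehat{Y\mid\T=\tB}-m_Y(\tB)\bigr] - \bigl[\hopehat{\bm{X}\mid\T=\tB}-\bm{g}(\tB)\bigr]^T\betaB - \hopehat{\bm{X}\mid\T=\tB}^T(\hat{\betaB}_\h-\betaB).
\end{equation}
We then bound the supremum over $\mathcal{C}$ of each of the three terms separately.

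For the first two terms, we would apply the uniform almost sure convergence result for the double-kernel Nadaraya--Watson estimator of \cite{DaboNiang2016} (their Theorem 3.1 together with Remark 4). By Remark~\ref{hipAnne}, its hypotheses H1--H8 follow from assumptions~\ref{K}--\ref{mixing} and from the bandwidth condition $\log\hat n/(h_{1\hat n}^d h_{2\hat n}^k\hat n^{1/4-1/r})\to0$. That result gives, for $Y$ and for each coordinate $X_{\cdot j}$ (using the analogous conditions in assumption~\ref{fs2}), the bound
\[
\sup_{\tB\in\mathcal{C}}\bigl|\hopehat{Y\mid\T=\tB}-m_Y(\tB)\bigr| = O(h_{2\hat n}) + O\Bigl(\sqrt{\log\hat n/(\hat n h_{1\hat n}^d h_{2\hat n}^k)}\Bigr)\quad\text{a.s.}
\]
The bias part is $O(h_{2\hat n})$ because $m_Y$ and the $g_j$ are Lipschitz; $m_Y=\bm g^T\betaB+r$ inherits this from assumption~\ref{ryg}. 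The kernel $K_2$ has compact support and acts on the median dissimilarity, so only points with $\|\T_k-\tB\|$ effectively of order $h_{2\hat n}$ contribute, and compactness of $\mathcal{C}$ is what makes the median-based dissimilarity compatible with this argument. The variance part is controlled by a covering of $\mathcal{C}$, Bernstein-type blocking under $\alpha$-mixing with $b>4.5d$, and truncation of $|Y|$ at level $\hat n^{1/r}$, which is exactly where the $r>4$ moment and the $\hat n^{1/4-1/r}$ factor enter. Since $p$ is finite and $\betaB$ is fixed, the second term obeys the same rate.

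For the third term, we first note that the kernel weights are nonnegative and sum to one. Hence $\sup_{\tB}\|\hopehat{\bm{X}\mid\T=\tB}\|\le \sup_{\tB}\|\bm g(\tB)\| + o(1)$ a.s., by the previous step and the continuity of $\bm g$ on the compact set $\mathcal{C}$, so this factor is $O(1)$ a.s. By the law of the iterated logarithm \eqref{convergenciabeta} in Theorem~\ref{teo1}, $\|\hat{\betaB}_\h-\betaB\| = O(\sqrt{\log\log\hat n/\hat n})$ a.s. This is $o\bigl(\sqrt{\log\hat n/(\hat n h_{1\hat n}^d h_{2\hat n}^k)}\bigr)$ because $h_{1\hat n},h_{2\hat n}\to0$, so the third term is absorbed into the stated rate and \eqref{convergenciaM} follows.

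The main obstacle is the first step: checking rigorously that the uniform rate of \cite{DaboNiang2016} applies to our weights. Three features differ from their setting. First, the conditioning variable $\T_i$ is built from neighbouring responses $Y$. Second, the dissimilarity $d_m$ is not a norm on $\rr^k$. Third, the point $\s$ is left out of the sums. We would handle the first by noting that stationarity and mixing are assumed jointly for $(Y_i,\bm X_i,\T_i)$. We would handle the second by bounding the bias through the Lipschitz property along the median, using compactness of $\mathcal{C}$ as Remark~\ref{hipAnne} indicates. The third changes the estimator by only one term of order $O(1/(\hat n h_{1\hat n}^dh_{2\hat n}^k))$, which is negligible.
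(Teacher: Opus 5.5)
Your argument is essentially the paper's. The paper writes $\hat r(\tB)=\hat r^*_{\h}(\tB)-\sum_i\omega_{\h}(\tB,\T_i)\bm{X}_i^T(\hat{\betaB}_{\h}-\betaB)$, with $\hat r^*_{\h}$ the kernel smoother of $r(\T_i)+\varepsilon_i=Y_i-\bm{X}_i^T\betaB$. Your first two terms are exactly $\hat r^*_{\h}-r$, split into the $Y$-smoother and the $\bm{X}$-smoother. This split has a small advantage: assumption~\ref{fs2} is stated for $Y$ and for each coordinate of $\bm{X}$, not for $Y-\bm{X}^T\betaB$. Both proofs then use the uniform rate of \cite{DaboNiang2016} through Remark~\ref{hipAnne}, the law of the iterated logarithm \eqref{convergenciabeta}, and the same absorption of $\sqrt{\log\log\hat n/\hat n}$. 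The one real difference is how the factor multiplying $\hat{\betaB}_{\h}-\betaB$ is shown to be $O(1)$. The paper bounds $\sum_i\omega_{\h}(\tB,\T_i)\|\bm{X}_i\|$ via Lemma~\ref{lema4}, a count of active indices, and a strong law it asserts holds uniformly over those index sets. You instead bound $\|\hopehat{\bm{X}\mid\T=\tB}\|\le\sup_{\mathcal{C}}\|\bm{g}\|+o(1)$ by reusing your first step. That is shorter and avoids the loosely justified uniform strong law.

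One caution concerns the step you single out as the main obstacle. The paper does not prove it either; it delegates it to Remark~\ref{hipAnne}. Your proposed fix does not follow from the stated assumptions. Since $K_2$ is evaluated at $d_m(\tB,\T_k)=|Med(\tB)-Med(\T_k)|$, a nonzero weight only forces $|Med(\T_k)-Med(\tB)|\le h_{2\hat n}$. For $k\ge2$ it does not force $\|\T_k-\tB\|\lesssim h_{2\hat n}$, because vectors with the same median can be at distance of order $\operatorname{diam}(\mathcal{C})$. So the smoother actually targets $\hope{Y\mid Med(\T)=Med(\tB)}$. Assumption~\ref{ryg} (Lipschitz in $\|\cdot\|$ on $\rr^k$) plus compactness of $\mathcal{C}$ then only gives an $O(1)$ bias. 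Being ``Lipschitz along the median'' is therefore an extra hypothesis, not a consequence of assumption~\ref{ryg}. To get the $O(h_{2\hat n})$ term you need $m_Y$ and the $g_j$ (hence $r$) to depend on $\tB$ only through $Med(\tB)$ and to be Lipschitz in it. You also need the density lower bound of assumption~\ref{fs} transferred to $Med(\T)$. Once that is made explicit, your argument goes through as written. The stated variance rate also remains a valid upper bound, since the smoother is then effectively one-dimensional in $Med(\T)$ and $h_{2\hat n}^k\le h_{2\hat n}$.
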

The proofs of these theorems are given in the Appendix.



\section{Simulation study}

In this section, we investigate the performance of the proposed predictor through a series of simulation studies conducted under controlled settings. The experiments are designed to assess the behavior of the competing predictors across different scenarios, with particular emphasis on the role of the spatial autocorrelation structure and the sampling grid configuration. Predictive accuracy is evaluated using the root mean squared error (RMSE). For each method, we report the mean and standard deviation computed over 50 replications and summarize the empirical distribution of the errors using boxplots.

\subsection{Procedure of prediction of $\hat{Y}_0$ and methodology}

\begin{description}
\item[Step 1.] Specify sets of bandwidths $S(h_{1})$ and $S(h_{2})$ of respectively $K_1$ and $K_2$ and the set of the number of nearest neighbors $k$, $S(k)$.
\item[Step 2.] For each $h_{1\nn} \in S(h_{1})$, $h_{2\nn} \in S(h_{2})$, $k \in S(k)$ and each site $\s_0 \in \mathcal{D}_\nn$, compute  the matrix $\bm{W}_{\h} \in \rr^{\hat{n}\times \hat{n}}$ with entries $(\bm{W}_{\h})_{ij} = \omega_{\h}(\bm{T}_i,\bm{T}_{j})$  defined in \eqref{pesos0}. With that, we compute $\hat{\betaB}_\mathbf{h}$ via \eqref{xp} and \eqref{yp} and $\hat{r}_\h(\bm{T}_0) $ through \eqref{mhat}. Finally, compute $\hat{Y}^{\h}_0=\bm{X}_{0}^T \hat{\bm{\beta}}_\h 
	+ (\bm{W}_{\h})_{0\cdot} \bigl(\bm{Y} - \mathbb{X}\hat{\bm{\beta}}_\h \bigr), $ with $(\bm{W}_{\h})_{0j}=\omega_\h(\bm{T}_0,\bm{T}_{j})$ y $j = 1, \dots, \hat{n}$.

\item[Step 3.] Compute optimal bandwidths $h_{1\nn,\mathrm{opt}}$, $h_{2n,\mathrm{opt}}$ and $k_{\mathrm{opt}}$ by applying a cross-validation procedure over $S(h_1)$, $S(h_2)$ and $S(k)$. More precisely, consider the following minimisation problem, i.e. determine $h_{1\nn,\mathrm{opt}}$, $h_{2n,\mathrm{opt}}$ and $k_{\mathrm{opt}}$ minimising the RMSE over the $\hat{n}$ sites, $
\min_{\h} \sqrt{\frac{1}{n} \sum_{\s_0 \in \mathcal{I}_n}
\left( \hat{Y}_{0}^{\,\h} - Y_{0} \right)^2}. $
\item[Step 4.] For each site $\s_0$, compute $\hat{Y}_{0}^{\,\h_{\mathrm{opt}}}$ corresponding to $h_{1\nn,\mathrm{opt}}$, $h_{2n,\mathrm{opt}}$ and $k_{\mathrm{opt}}$. Thus, this procedure is used in the subsequent analysis, in which we aim to study the behaviour of our predictor. All the following numerical analysis were carried out using the R software (version 4.4.3).
\end{description}

To assess the variability of the errors reported in the next sections, the original sample of size $\hat{n}$ is randomly split into a training set (70\%) and a test set (30\%) \citep{WADOUX2021}. Model parameters, both for the proposed methods and for the competing approaches, are estimated using the training set, while predictive performance is evaluated on the test set. Prediction accuracy is measured by the RMSE. In addition, to evaluate the ability of the methods to estimate the linear parameter, we consider the mean absolute error (MAE) of the parameter estimates, defined as $
MAE_\beta = \frac{1}{p} \sum_{i=1}^{p} \lvert \beta_i - \hat{\beta}_i \rvert.$ 

This procedure is repeated 50 times to obtain 50 prediction errors
\subsection{Settings}
Let \(D = [0,1]^2\). We consider regular, irregular, and clustered (focused) sampling grids (see Figure \ref{fig:grillas}) with sample sizes \(\hat{n} = 100, 625, 1089\). A realization of size \(\hat{n} = 1936\) is generated and nested sub-squares are used to extract subsamples: the 100-point square is contained in the 625-point square, which is in turn contained in the 1089-point square, ensuring all samples arise from the same underlying random field.
\begin{figure}
	\centering
	\includegraphics[width=.9\linewidth]{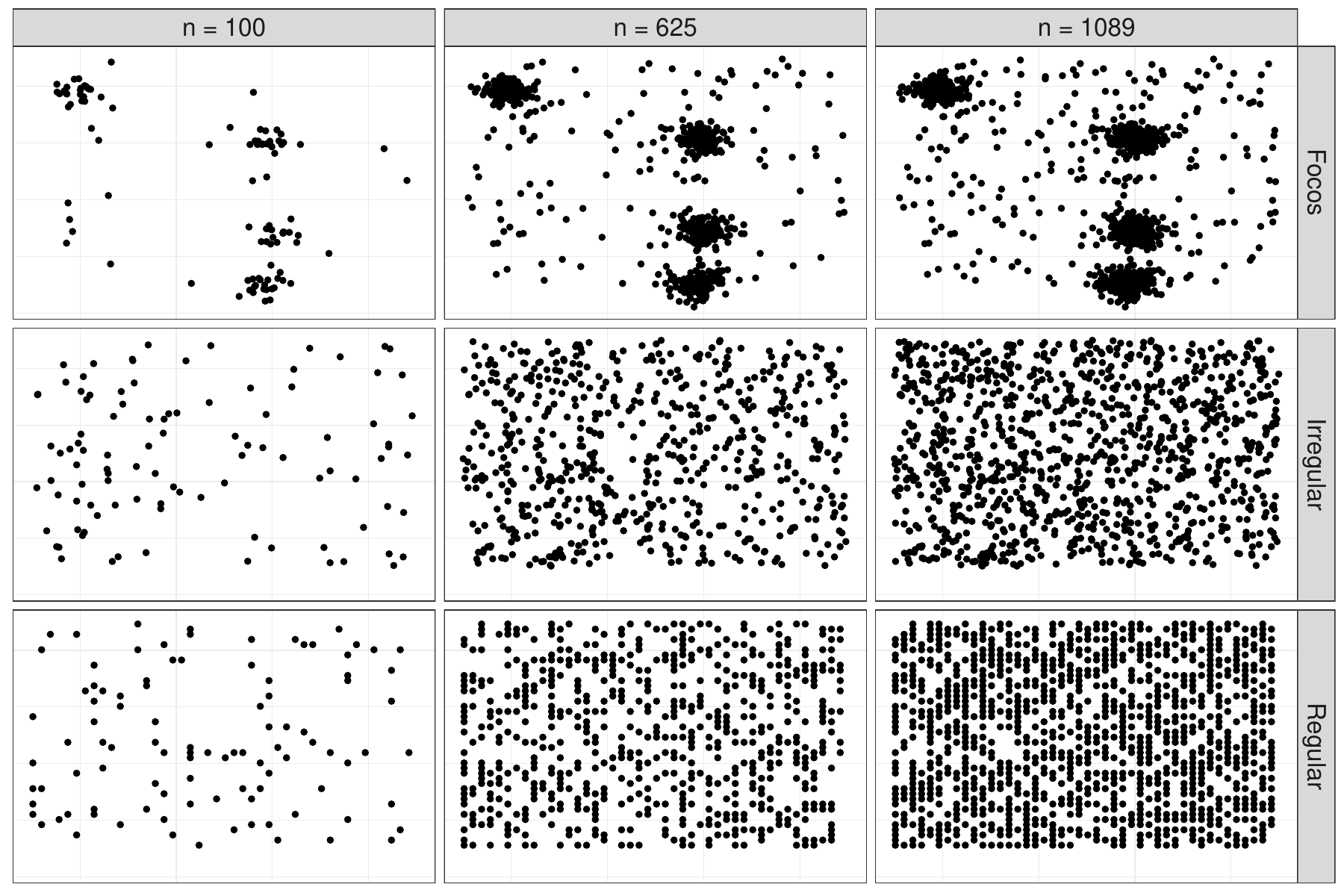}
	\caption{Grid variation as a function of size and structure.}
	\label{fig:grillas}
\end{figure}
Covariates are simulated using the algorithm propose by \cite{Oliver2003}, with \(p = 8\). For each \(\bm{\epsilon}_u\) and \(\bm{\epsilon}_v\), we generate \(\hat{n}\) realizations from a standard normal distribution. If \(u\) is even and \(v = u+1\), the correlation satisfies \(\rho_{uv} \sim U[-1,1]\). Mean functions are spatially constant but differ across covariates, i.e., \(\mu_u(\s_i) = c_u\) with \(c_u \sim U[0.5,2]\) for \(u = 1,\dots,p\). 

Variance--covariance matrices \(\SigmaB_{uu}\) are defined via
\(
(\SigmaB_{uu})_{ij} = \sigma_u^2\, \textbf{R}_{uu}(\s_i,\s_j).
\)
 with \(\sigma_u = \sqrt{\Var{X_u(\s)}}\) constant over \(\s\) and set \(\sigma_u^2 = 1\) for all \(u=1,\dots,p\).
Autocorrelation matrices $\textbf{R}(\s, \tB)$ are specified through a distance-based function, i.e.,
\(
(\bm{R}_{uv})_{ij} = R(\|\s_i - \s_j\|),
\)
where \(R: \mathbb{R}^+ \to \mathbb{R}\). We consider four different choices of \(R\), randomly assigned across the \(p\) covariates.
We consider four autocorrelation models. First, a spherical model with nugget effect \(m_2=0.1\), partial sill \(m_1=1\),  and range \(r=0.5\); and its counterpart without nugget (\(m_2=0\)). Second, a Gaussian model with nugget \(m_2=0.1\), \(m_1=1\), and \(r=0.5\). Finally, a sinc model (a J-Bessel model with \(\alpha=\tfrac{1}{2}\)),
\begin{equation}\label{sinc}
R(h)=
\begin{cases} 
1, & h = 0, \\
\frac{m_1}{m_1+m_2}\left[\left(\frac{2\theta_{21}}{h}\right)^{1/2}\Gamma\left(\tfrac{3}{2}\right)J_{1/2}\left(\frac{h}{\theta_{21}}\right)\right], & h \ne 0,
\end{cases}
\end{equation}
with \(m_1=1\), \(m_2=0.1\), range \(r=0.05\), where \(\Gamma\) is the gamma function and \(J_{1/2}\) is the Bessel function of the first kind of order \(1/2\).

The response vector \(\bm{Y} \) is generated using the simulated covariates and the spatial autoregressive model
\begin{align}\label{econometrico}
\bm{Y} &= \mathbb{X}\betaB + \rho \bm{VY} + \bm{\varepsilon}
= (\bm{I} - \rho \bm{V})^{-1}(\mathbb{X}\betaB + \bm{\varepsilon}),
\end{align}
where \(\bm{V} \in \mathbb{R}^{\hat{n}\times \hat{n}}\) is a neighborhood matrix defined by the the weight defined in \eqref{pesos0}, considering only the kernel associated with the Euclidean distance 
with \(h=0.5\). The error \(\bm{\varepsilon}\) is standard normal noise, and \(\betaB \in \mathbb{R}^p\) has independent components drawn from \(N(0,10^2)\). The parameter \(\rho\) controls the strength of spatial dependence, ranging from no interaction (\(\rho=0\)) to increasing influence of neighboring responses; we consider \(\rho \in \{0, 0.6, 0.9\}\). 

All datasets are standardized by centering and scaling each variable; for the test set, standardization uses only the training-set parameters.

\subsection{Results on prediction}

To organize the results, the analysis is divided into two parts. First, we compare the overall performance of the methods with respect to sample size, grid structure used for data generation, and the parameter $\rho$ (defined in \eqref{econometrico}) using prediction error as the evaluation metric. Second, we evaluate the estimation of the linear parameter.

We examine the empirical distribution of RMSE values using boxplots. Although all figures display the same metric, the scale of the $y$-axis may vary across cases to improve visualization.

Methods are presented in the following order. The first four correspond to semiparametric predictors that differ in the specification of the nonparametric component: KS1 uses only geographical distances; K1ME uses only distances between response medians; K2ME combines both distances through two separate kernels; and K1M incorporates both distances within a single kernel evaluated at their product (see \cite{GarciaArancibia2023NPS}). These are followed by kriging with external drift (KED), lattice kriging (LK), the econometric model used for data generation (SAR), and, finally, ordinary least squares (OLS).
\begin{sidewaysfigure}
			\centering
			\begin{tabular}{>{\columncolor[gray]{0.9}} c c c c}
				\rowcolor[gray]{0.9}&  \multicolumn{3}{c}{\textbf{$n = 100 \hspace{5cm} n = 625 \hspace{5cm} n = 1089$}}\\
				{\begin{sideways}{\hspace{1.5cm}$\rho = 0$}\end{sideways}}&
				\multicolumn{3}{c}{\includegraphics[width=0.95\textwidth]{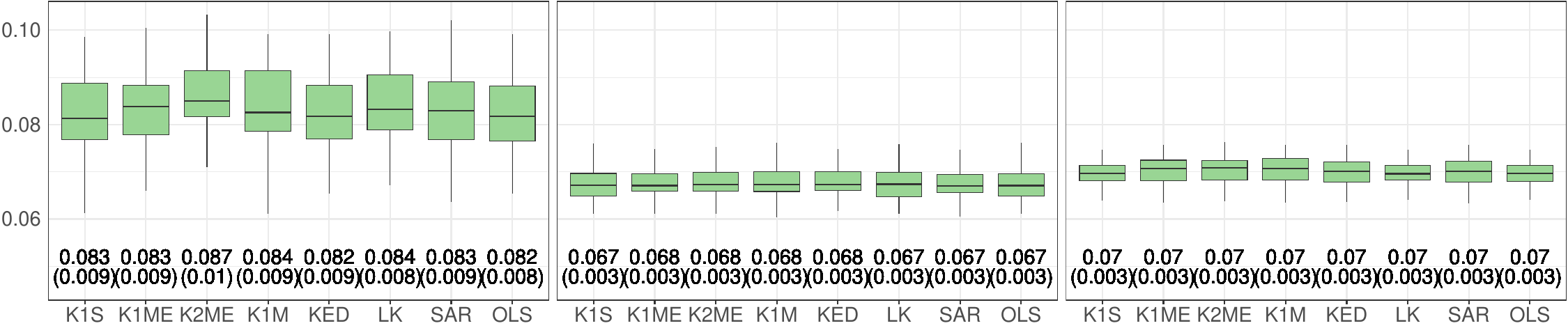}} \\
				{\begin{sideways}{\hspace{1.5cm}$\rho = 0.6$}\end{sideways}}&
				\multicolumn{3}{c}{\includegraphics[width=0.95\textwidth]{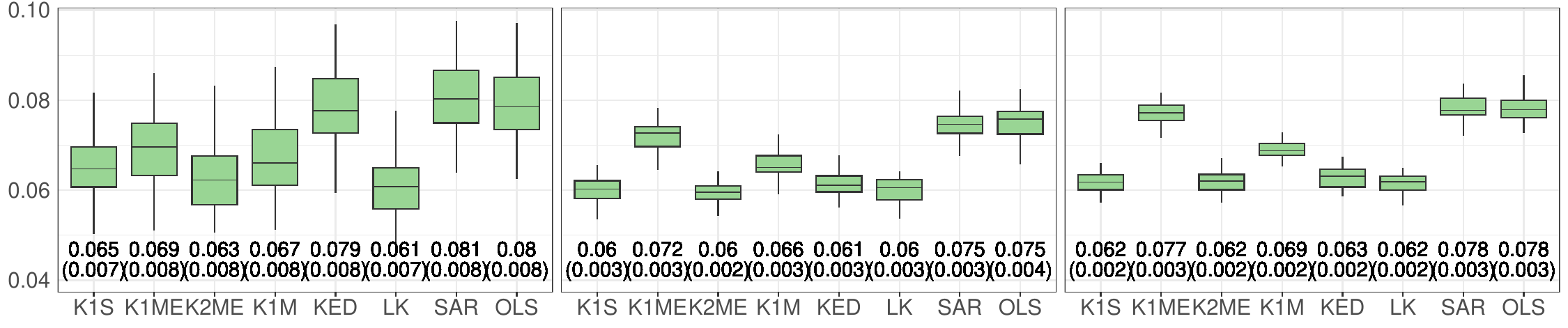}}\\		
				{\begin{sideways}{\hspace{1.5cm} $\rho = 0.9$}\end{sideways}}&
				\multicolumn{3}{c}{\includegraphics[width=0.95\textwidth]{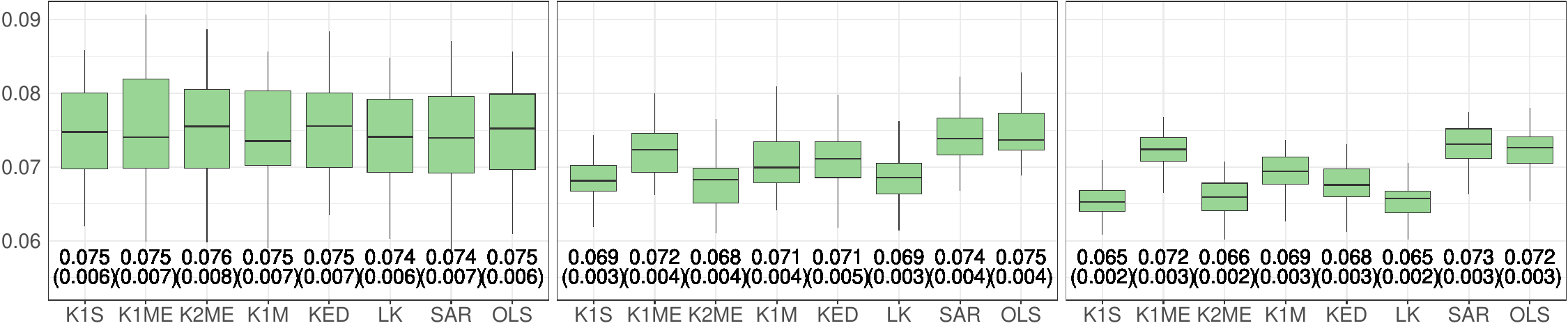}}			
			\end{tabular}
		\caption{RMSE distribution for the irregular grid.}
		\label{fig:regular_nXrho}
	\end{sidewaysfigure}

\subsubsection{Regular grid}
For the regular grid case (Figure \ref{fig:regular_nXrho}), each panel displays the nine combinations arising from three values of the autoregressive parameter, $\rho \in \{0, 0.6, 0.9\}$, and three sample sizes, $n \in \{100, 625, 1089\}$.
Overall, RMSE decreases as the grid size increases, with a more pronounced reduction in dispersion than in central tendency. This improvement is particularly evident when moving from $n = 100$ to larger sample sizes, whereas differences between $n = 625$ and $n = 1089$ are negligible and, in some cases, a slight increase in RMSE is observed.

The impact of the autoregressive parameter $\rho$ is more noticeable for small sample sizes. For $n = 100$, increasing $\rho$ from $0$ to $0.6$ or $0.9$ generally improves predictive performance across most methods, especially for the semiparametric approaches K1M, K2ME, and LK. In contrast, for $n = 625$ and $n = 1089$, the effect of $\rho$ becomes negligible, with similar RMSE values across methods. Notably, when $\rho = 0.6$, greater variability in performance is observed. This may reflect a scenario in which both covariates and spatial dependence contribute meaningfully to prediction, unlike the extreme cases where either covariates ($\rho = 0$) or the autoregressive component ($\rho = 0.9$) dominate. The intermediate case thus highlights differences in how effectively methods integrate these two sources of information.

In general, the proposed methodologies perform well. Excluding the case $\rho = 0$, K2ME exhibits the best predictive performance, with results comparable to LK. Additionally, K1S shows competitive performance across several scenarios under a regular grid.

\subsubsection{Irregular grid}
	\begin{sidewaysfigure}
			\centering
			\begin{tabular}{>{\columncolor[gray]{0.9}} c c  c c}
				\rowcolor[gray]{0.9}&  \multicolumn{3}{c}{\textbf{$n = 100 \hspace{5cm} n = 625 \hspace{5cm} n = 1089$}}\\
				{\begin{sideways}{\hspace{1.5cm}$\rho = 0$}\end{sideways}}&
				\multicolumn{3}{c}{\includegraphics[width=0.95\textwidth]{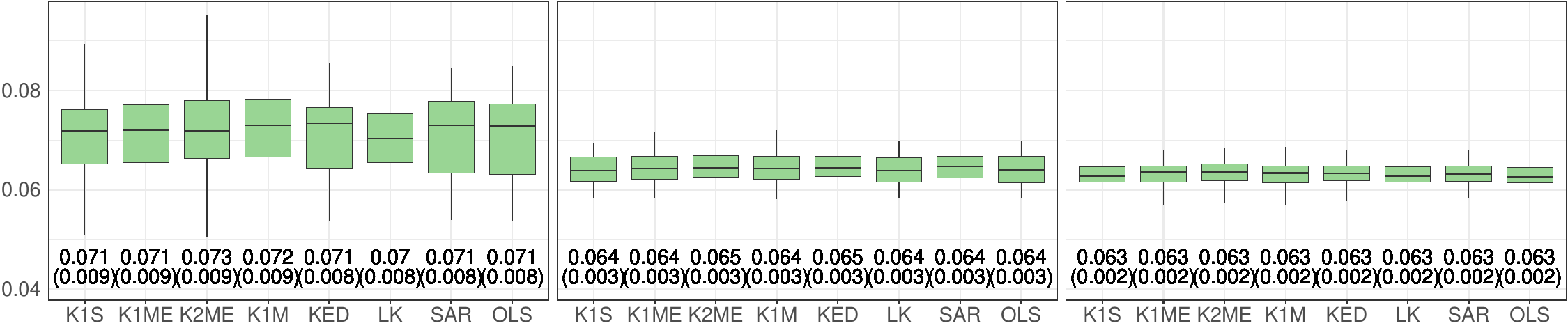}} \\
				{\begin{sideways}{\hspace{1.5cm}$\rho = 0.6$}\end{sideways}}&
				\multicolumn{3}{c}{\includegraphics[width=0.95\textwidth]{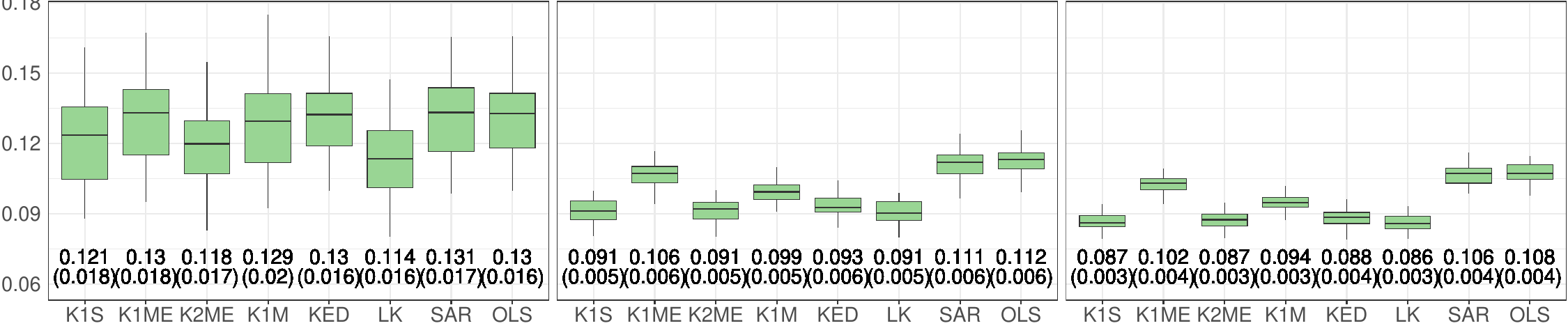}}\\
				{\begin{sideways}{\hspace{1.5cm} $\rho = 0.9$}\end{sideways}}&
				\multicolumn{3}{c}{\includegraphics[width=0.95\textwidth]{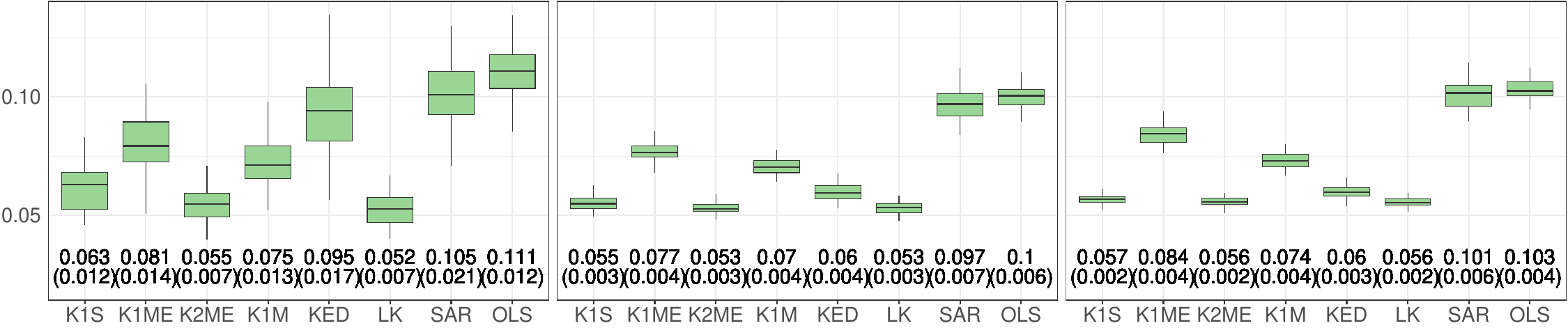}}
			\end{tabular}
		\caption{RMSE distribution for the irregular grid.}
		\label{fig:Irregular_nXrho}
	\end{sidewaysfigure}
Figure \ref{fig:Irregular_nXrho} presents the distribution of prediction errors under an irregular grid. Across all nine scenarios, semiparametric methods exhibit a similar relative ranking. The best performance is achieved by methods incorporating geographic distance in their weighting schemes. As the sample size increases, the availability of nearby observations improves, and kernels based on geographic distance—dominant in K1S and highly influential in K2ME—enhance predictive accuracy. In particular, K2ME and LK consistently yield the lowest RMSE in scenarios with spatial autocorrelation.

These results highlight the increasing relevance of spatial dependence, rendering OLS insufficient, as reflected by its consistently higher errors relative to semiparametric methods. Similarly, SAR shows poor predictive performance in most cases with $\rho \neq 0$, despite explicitly modeling autoregression.

Compared to the regular grid case (Figure \ref{fig:regular_nXrho}), greater variability across methods is observed, especially when $\rho = 0.9$. RMSE values are generally higher, particularly for $\rho = 0.6$, suggesting that irregular grid designs introduce additional estimation complexity, likely due to a reduced ability to capture local spatial correlation. For $\rho = 0.9$, some methods such as K1ME, SAR, and OLS also exhibit increased errors.

When $\rho = 0$, differences between grid designs become negligible, as expected, since spatial structure does not contribute to prediction. In contrast, for $\rho > 0$ and larger sample sizes, semiparametric methods outperform alternatives, achieving both lower RMSE and reduced variability. While kriging-based (parametric) methods improve as $\rho$ increases, they remain more sensitive to grid design. OLS consistently underperforms in the presence of spatial dependence, and SAR provides only marginal improvement, with predictive accuracy remaining close to that of OLS.

\subsubsection{Clustered grid}
    	\begin{sidewaysfigure}
			\begin{tabular}{>{\columncolor[gray]{0.9}}c p{0.3\textwidth} 					p{0.3\textwidth} p{0.3\textwidth}}
				\rowcolor[gray]{0.9}& \multicolumn{3}{c}{\textbf{$n = 100 \hspace{5cm} n = 625 \hspace{5cm} n = 1089$}} \\
				{\begin{sideways}\hspace{1.5cm}{$\rho = 0$}\end{sideways}}&
				\multicolumn{3}{c}{\includegraphics[width=0.95\textwidth]{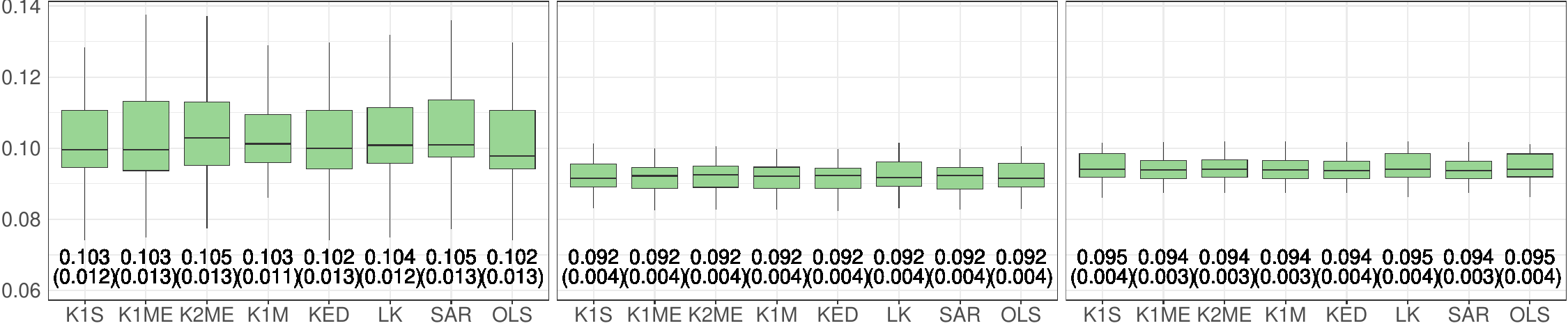}} \\
				{\begin{sideways}\hspace{1.5cm}{$\rho = 0.6$}\end{sideways}}&
				\multicolumn{3}{c}{\includegraphics[width=0.95\textwidth]{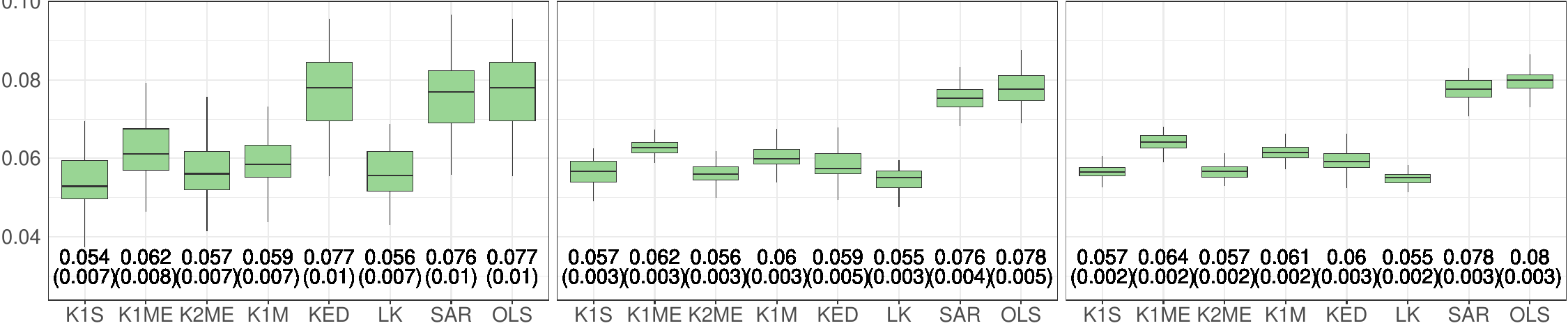}}\\			
				{\begin{sideways}\hspace{1.5cm}{ $\rho = 0.9$}\end{sideways}}&
				\multicolumn{3}{c}{\includegraphics[width=0.95\textwidth]{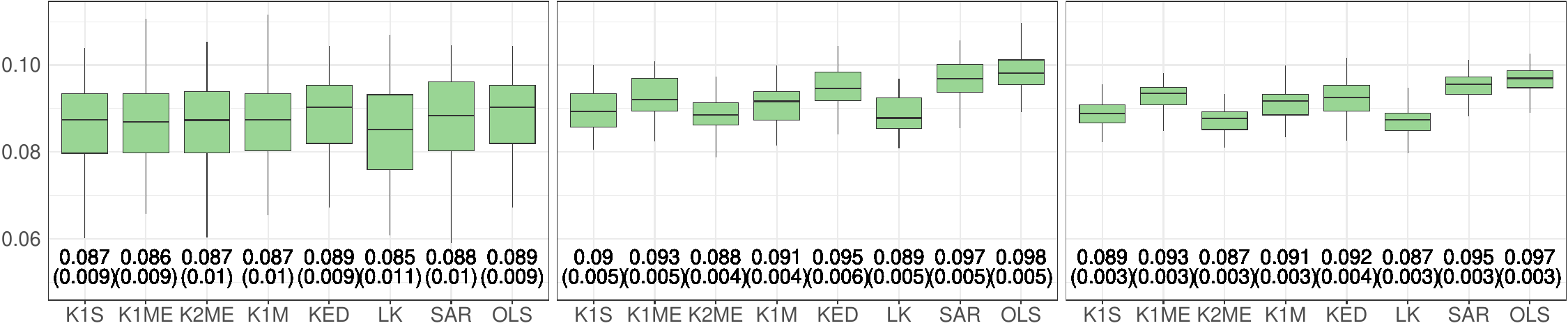}}			
			\end{tabular}
\caption{RMSE distribution for the clustered grid.}
\label{fig:Focos_nXrho}
	\end{sidewaysfigure}
Figure \ref{fig:Focos_nXrho} shows the same nine scenarios for the clustered grid design. As in the regular grid case, greater variability is observed when $\rho = 0.6$. In this setting, semiparametric methods together with LK achieve the best performance, with differences relative to other methods diminishing as the sample size increases.

Comparing the three grid designs—regular (Figure \ref{fig:regular_nXrho}), irregular (Figure \ref{fig:Irregular_nXrho}), and clustered (Figure \ref{fig:Focos_nXrho})—reveals how grid structure, spatial dependence, and methodology jointly affect predictive performance. RMSE decreases with increasing sample size ($n$) across all designs and values of $\rho$, reflecting improved estimation with more information.

Grid design plays a more substantial role when spatial dependence is present ($\rho > 0$). The irregular grid yields higher and more dispersed errors, particularly for small samples ($n = 100$), whereas the clustered design tends to produce lower RMSE, especially for moderate to high dependence ($\rho \ge 0.6$). This suggests that concentrated and highly correlated observations improving the ability of the methods to capture spatial dependence.

Across all grid designs, semiparametric methods consistently outperform alternatives, showing lower RMSE, reduced dispersion, and greater stability with respect to $n$ and $\rho$. Their advantage is most pronounced in settings with spatial dependence and larger samples, indicating a more effective integration of covariate and spatial information. Consequently, these methods provide a flexible and robust alternative when grid design and spatial correlation structure are not fully controlled.

\subsection{Results on estimating  $\texorpdfstring{\boldsymbol{\beta}}{\beta}$}
This section evaluates the predictive methods in terms of their ability to estimate the linear parameter of the proposed model. The comparison is restricted to approaches that explicitly include such estimation, namely the proposed semiparametric methods, lattice kriging (LK), the SAR model, and OLS.

Figure \ref{fig:betas} displays the mean differences between estimated and true coefficients across scenarios as the sample size increases, considering the three grid designs (regular, irregular, and clustered) and three levels of spatial autocorrelation ($\rho \in \{0, 0.6, 0.9\}$). For clarity, only K1S, K2ME, LK, OLS, and SAR are shown, while Table \ref{tablarotada} complements the analysis by including K1M and K1ME, reporting mean and standard deviation of $MAE_\beta$.

Overall, estimation error decreases with sample size, consistent with improved precision as more information becomes available, although the magnitude of this reduction varies across methods. Semiparametric approaches (K1S, K2ME) exhibit the most stable performance, with systematic error reduction and generally lower levels compared to alternatives. In contrast, OLS performs poorly, maintaining relatively large errors even for large samples, particularly when spatial dependence is present. SAR shows limited improvement over OLS.
\begin{figure}
	\centering
	\includegraphics[width=0.9\linewidth]{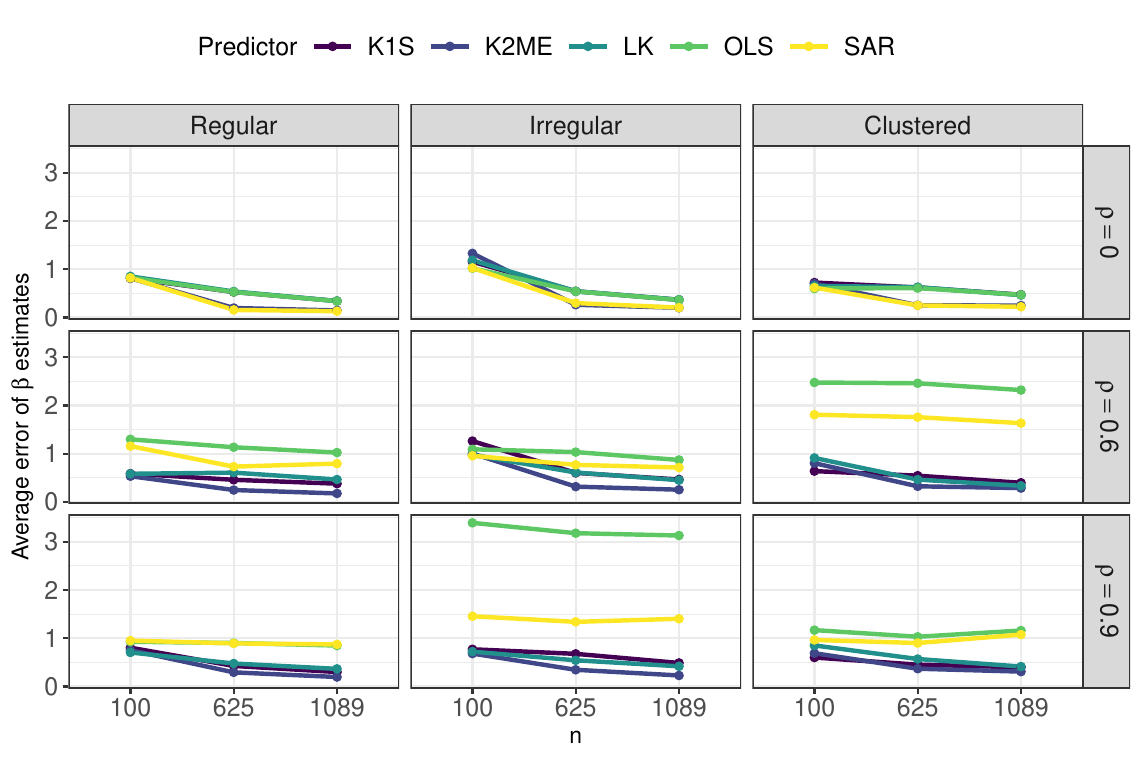}
	\caption{Average $MAE_{\beta}$ for different predictive methods.}
	\label{fig:betas}
\end{figure}
K2ME consistently achieves the lowest errors, with a clear decreasing trend and stable behavior across scenarios. K1S shows competitive performance but tends to be less stable, particularly in some grid configurations. LK improves with sample size, although its performance is less favorable in low-dependence settings ($\rho = 0$).

Differences across grid designs are present but secondary. The irregular grid generally leads to higher errors, especially for intermediate dependence ($\rho = 0.6$), whereas the regular grid exhibits smoother and more stable error reduction. 

When $\rho = 0$, all methods perform similarly, as spatial structure does not contribute to estimation. As $\rho$ increases, differences become more pronounced: semiparametric methods, and to a lesser extent LK, better exploit spatial dependence, while OLS and SAR improve more slowly and remain consistently less accurate.

Table \ref{tablarotada} further shows that K1S tends to produce higher errors in the absence of spatial dependence, particularly under regular and clustered grids, whereas K1M, K1ME, and K2ME behave similarly once moderate sample sizes are reached. For $\rho = 0.6$, performance becomes more heterogeneous, with some non-monotonic patterns in error reduction; K2ME remains the most stable, while K1S continues to lag. Under strong dependence ($\rho = 0.9$), overall errors increase and convergence slows, with K1ME showing instability in some grid designs, whereas K2ME maintains consistent improvements and the lowest error levels.

In summary, although no single method dominates uniformly, differences across methods become more relevant as spatial dependence increases. K2ME stands out for its robustness to changes in $\rho$ and grid design, while K1S and, in some cases, K1ME exhibit greater variability.
	\begin{sidewaystable}
		\centering
				\caption{$MAE_{\beta}$ over the 50 runs for each combination of $\rho$, $n$, and grid type.}
				\label{tablarotada}
			\begingroup
			\renewcommand{\baselinestretch}{1}
			\footnotesize
			\selectfont
			\renewcommand{\arraystretch}{1.5}
				\begin{tabular}[t]{lllllllllll}
					\toprule
					\multirow{2}{*}[-4mm]{$\rho$} &\multirow{2}{*}[-4mm]{Predictor} & \multicolumn{3}{c}{Regular}
					& \multicolumn{3}{c}{Irregular}
					& \multicolumn{3}{c}{Clustered}\\
					\cmidrule{3-11}
					&&$n=100$ & $n=625$ & $n=1089$ 
					&$n=100$ & $n=625$ & $n=1089$ 
					& $n=100$ & $n=625$& $n=1089$\\
					\midrule
					\multirow{6}{*}{$0$} 
					& K1M & 0.83 (0.19) & 0.16 (0.05) & \textbf{0.13} (0.04) & 1.11 (0.47) & 0.25 (0.1) & 0.18 (0.06) & 0.61 (0.24) & 0.23 (0.08) & 0.24 (0.08)\\
					& K1ME & 0.82 (0.19) & 0.17 (0.06) & 0.14 (0.05) & \textbf{1.02} (0.41) & 0.27 (0.11) & 0.19 (0.07) & \textbf{0.6} (0.25) & \textbf{0.22} (0.07) & 0.24 (0.08)\\
					& K1S & \textbf{0.81} (0.18) & 0.53 (0.24) & 0.34 (0.14) & 1.16 (0.52) & 0.54 (0.21) & 0.37 (0.13) & 0.72 (0.26) & 0.62 (0.27) & 0.47 (0.17)\\
					& K2ME & \textbf{0.81} (0.2) & 0.2 (0.05) & {0.14} (0.04) & 1.33 (0.48) & \textbf{0.26} (0.08) & \textbf{0.2} (0.07) & 0.7 (0.24) & 0.25 (0.08) & 0.24 (0.08)\\
					& LK & 0.85 (0.18) & 0.54 (0.23) & 0.34 (0.14) & 1.18 (0.39) & 0.54 (0.2) & 0.37 (0.13) & 0.65 (0.26) & 0.63 (0.27) & 0.46 (0.17)\\
					& OLS & 0.82 (0.18) & 0.53 (0.24) & 0.34 (0.14) & \textbf{1.02} (0.41) & 0.54 (0.21) & 0.36 (0.12) & \textbf{0.6} (0.24) & 0.61 (0.27) & 0.47 (0.17)\\
					& SAR & 0.82 (0.19) & \textbf{0.15} (0.03) & 0.13 (0.03) & 1.03 (0.42) & 0.3 (0.1) & \textbf{0.2} (0.06) & 0.62 (0.25) & 0.25 (0.07) & \textbf{0.22} (0.08)\\
					\addlinespace
					\multirow{6}{*}{$0.6$} 
					& K1M & 0.6 (0.15) & 0.41 (0.06) & 0.47 (0.04) & 0.92 (0.31) & 0.52 (0.14) & 0.44 (0.09) & 0.77 (0.2) & 0.55 (0.08) & 0.56 (0.07)\\
					& K1ME & 0.63 (0.2) & 0.64 (0.07) & 0.77 (0.05) & \textbf{0.73} (0.23) & 0.64 (0.16) & 0.6 (0.11) & 0.91 (0.28) & 0.73 (0.09) & 0.83 (0.08)\\
					& K1S & 0.59 (0.17) & 0.46 (0.17) & 0.38 (0.15) & 1.26 (0.37) & 0.6 (0.2) & 0.47 (0.17) & \textbf{0.64} (0.18) & 0.54 (0.13) & 0.4 (0.1)\\
					& K2ME & \textbf{0.53} (0.13) & \textbf{0.25} (0.08) & \textbf{0.18} (0.05) & 1 (0.33) & \textbf{0.32} (0.07) & \textbf{0.25} (0.06) & 0.8 (0.25) & \textbf{0.32} (0.08) & \textbf{0.29} (0.06)\\
					& LK & 0.58 (0.1) & 0.61 (0.24) & 0.46 (0.16) & 0.98 (0.34) & 0.61 (0.17) & 0.45 (0.15) & 0.91 (0.14) & 0.46 (0.14) & 0.34 (0.1)\\
					& OLS & 1.3 (0.12) & 1.13 (0.24) & 1.02 (0.19) & 1.09 (0.26) & 1.03 (0.18) & 0.87 (0.15) & 2.48 (0.15) & 2.46 (0.19) & 2.32 (0.12)\\
					& SAR & 1.16 (0.23) & 0.73 (0.05) & 0.79 (0.04) & 0.96 (0.28) & 0.77 (0.14) & 0.71 (0.1) & 1.81 (0.29) & 1.76 (0.09) & 1.63 (0.06)\\
					\addlinespace
					\multirow{6}{*}{$0.9$}  & K1M & 0.87 (0.18) & 0.73 (0.08) & 0.67 (0.06) & 1.21 (0.3) & 0.9 (0.13) & 0.93 (0.1) & 0.68 (0.25) & 0.47 (0.05) & 0.53 (0.06)\\
					& K1ME & 0.93 (0.19) & 0.86 (0.08) & 0.84 (0.05) & 1.41 (0.33) & 1.05 (0.12) & 1.31 (0.06) & 0.72 (0.19) & 0.69 (0.09) & 0.86 (0.08)\\
					& K1S & 0.8 (0.16) & 0.43 (0.14) & 0.3 (0.09) & 0.77 (0.19) & 0.68 (0.19) & 0.49 (0.17) & \textbf{0.6} (0.22) & 0.45 (0.18) & 0.4 (0.13)\\
					& K2ME & 0.75 (0.2) & \textbf{0.29} (0.06) & \textbf{0.2} (0.05) &\textbf{ 0.68} (0.13) &\textbf{ 0.35} (0.08) & \textbf{0.23} (0.06) & 0.69 (0.23) & \textbf{0.37} (0.1) & \textbf{0.31} (0.07)\\
					& LK & \textbf{0.7} (0.15) & 0.48 (0.12) & 0.36 (0.1) & 0.72 (0.11) & 0.54 (0.19) & 0.42 (0.15) & 0.85 (0.15) & 0.57 (0.2) & 0.41 (0.17)\\
					& OLS & 0.93 (0.14) & 0.9 (0.1) & 0.85 (0.07) & 3.39 (0.2) & 3.18 (0.27) & 3.13 (0.23) & 1.17 (0.15) & 1.03 (0.11) & 1.16 (0.1)\\
					& SAR & 0.95 (0.18) & 0.89 (0.07) & 0.87 (0.05) & 1.46 (0.27) & 1.34 (0.08) & 1.4 (0.05) & 0.96 (0.17) & 0.9 (0.09) & 1.08 (0.06)\\
					\bottomrule
				\end{tabular}
			\endgroup
	\end{sidewaystable}

\section{Application to real data}
Here we presents an application of the proposed methods to real data, comparing their performance both among themselves and against classical parametric predictors and kriging-based approaches. In this section, we follow the same methodology defined in Section 4.1.

The empirical analysis focuses on the prediction of school performance scores using data from 1965 primary schools in Ohio for the academic year 2001--2002 \citep{LeSage2009}. The objective is to predict an index of average student performance (\texttt{pscore}) at the district level.

Explanatory variables include educational system indicators and socioeconomic characteristics. The former comprise enrollment (\texttt{enroll}), number of teachers (\texttt{teachers}), average teaching experience (\texttt{experience}), average teacher salary (\texttt{salary}), student--teacher ratio (\texttt{ppupil}), and per-pupil expenditure (\texttt{PPS}), disaggregated into instruction (\texttt{pinstruct}), building operations (\texttt{pbuilding}), administration (\texttt{padminist}), student support (\texttt{ppsupport}), and staff support (\texttt{pssupport}). Additionally, expenditure shares are considered for instruction (\texttt{instructp}), building operations (\texttt{buildingp}), administration (\texttt{administp}), student support (\texttt{psupportp}), and staff support (\texttt{ssupportp}). Socioeconomic variables include per capita income (\texttt{logpincome}), proportion of nonwhite population (\texttt{nonwhite}), poverty rate (\texttt{poverty}), residential stability (\texttt{samehouse}), and proportion of students attending public schools (\texttt{public}). Educational attainment is captured through the shares of the population over 25 with high school (\texttt{highsh}), associate (\texttt{assoc}), college (\texttt{college}), graduate (\texttt{grad}), and professional (\texttt{prof}) education.

Spatial coordinates correspond to ZIP codes (latitude and longitude), with multiple schools sometimes sharing the same location. In such cases, variables are averaged to obtain a single observation per ZIP code, resulting in a final sample of 799 spatial units. 

\subsection{Results}
\begin{figure}
\includegraphics[width=.99\textwidth]{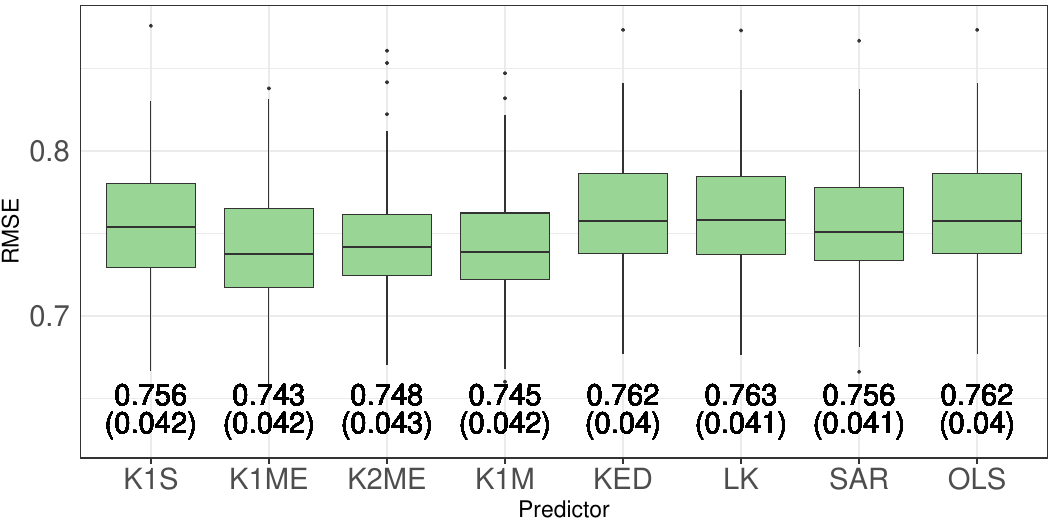}	
	\caption{RMSE distribution for the Ohio data.} \label{fig:RMSE_ohio}
\end{figure}
Figure \ref{fig:RMSE_ohio} shows that RMSE values are concentrated around 0.74--0.76. As we can observe,  the proposed approaches achieve slightly better predictive performance, with errors even lower than those of LK, which had shown highly competitive results in the simulation study. This proximity suggests that, for the Ohio dataset, both flexible and simpler models capture a substantial portion of the response variability. This is consistent with the exploratory analysis, which indicates clear associations between the response and covariates, together with a spatial structure that is present but not strongly dominant.

KED exhibits a slightly higher median and greater dispersion, indicating increased sensitivity to extreme observations or specific sampling configurations. However, these differences are minor and do not affect the overall ranking. LK performs similarly to kernel-based approaches, reinforcing the idea that local smoothing is sufficient to capture the relevant variability.

In contrast to previous settings, SAR and OLS do not show a marked deterioration in performance. This suggests that global spatial dependence and linear relationships provide a reasonable approximation to the underlying process in this case. Overall, the results indicate that method choice has a limited impact on RMSE, with only moderate gains from increased model complexity, in line with the exploratory findings.

\section{Conclusion and Discussion}

This work develops a semiparametric spatial autoregressive model for prediction, combining an interpretable linear component with a flexible nonparametric term while preserving spatial dependence. 

From a theoretical perspective, asymptotic properties of the estimators for both the linear parameters and the nonparametric component are established under standard regularity conditions. In particular, asymptotic normality of the linear parameter is derived, yielding convergence rates for both components and ensuring the consistency of the resulting predictor. These results provide formal support for the proposed estimation procedure and its large-sample behavior.

Simulation studies complement the theoretical analysis and show that the semiparametric approach achieves competitive predictive performance relative to classical parametric methods, especially when the functional form is complex or partially misspecified. At the same time, the results highlight that predictive accuracy is sensitive to the degree of spatial dependence and to the sampling design, underscoring the importance of careful exploratory analysis prior to model implementation.

The empirical application demonstrates that the proposed methodology is computationally feasible and adaptable to real-world data. The semiparametric model captures patterns not always identified by purely parametric approaches, without compromising predictive accuracy. 

Overall, the proposed semiparametric spatial autoregressive model and its estimation algorithm constitute a useful addition to the set of tools for spatial prediction. Its main advantage lies in its flexibility under functional uncertainty, while avoiding the need to specify covariance structures or spatial weight matrices.

Future research directions include extending the framework to nonstationary settings and more general dependence structures, as well as relaxing assumptions on the sampling design by allowing random spatial locations. Another relevant extension is the development of variance estimators for the linear parameters, enabling formal inference, either through consistent analytical estimators or bootstrap procedures adapted to dependent data.

\par
\section*{Acknowledgements}

This work was partially supported by the ANPCYT grant PICT-2019-00301 and the UNL grant 8552024010055LI.
\par


\bibhang=1.7pc
\bibsep=2pt
\fontsize{9}{14pt plus.8pt minus .6pt}\selectfont
\renewcommand\bibname{\large \bf References}
\expandafter\ifx\csname
natexlab\endcsname\relax\def\natexlab#1{#1}\fi
\expandafter\ifx\csname url\endcsname\relax
\def\url#1{\texttt{#1}}\fi
\expandafter\ifx\csname urlprefix\endcsname\relax\def\urlprefix{URL}\fi



\appendix
\section{Proofs of Theorems 1 and 2}\label{pruebas}
\subsection{Some results for the proofs}

\begin{lemma}\label{lemamixing}
	Let \(\{Z(\s) : \s \in \mathbb{Z}^d\}\) be a random field with spatial mixing coefficient defined, for $r\ge0$, by
	\[\label{alfa}
	\alpha(r)
	\defi
	\sup\Big\{
	\big|\mathbb P(A\cap B)-\mathbb P(A)\mathbb P(B)\big| :
	A\in\sigma(Z(s):s\in S),\;
	B\in\sigma(Z(s):s\in S'),\;
	\operatorname{dist}(S,S')\ge r
	\Big\},
	\]
	where $S,S'\subset\mathbb Z^d$ are finite sets, and
	\(
	\operatorname{dist}(S,S')
	\defi\inf\{\|s-s'\|: s\in S,\ s'\in S'\}.
	\)
	Assume that:
	\begin{enumerate}
		\item \label{1} There exist constants \(C > 0\) and \(b > 1\) such that, for all $r > 0$,
		\(
		\alpha(r) \leq C \, r^{-b}.
		\)
		\item \label{2} The domain $\mathcal D_\nn  \subset \mathbb Z^d$, for constants \(0 < c_L \leq C_L\), satisfies
		\(
		c_L \hat{n}^{1/d} \leq \mathcal D_\nn \leq C_L \hat{n}^{1/d},
		\)
		or equivalently, \(\mathcal D_\nn \asymp \hat{n}^{1/d}\).

		\item \label{3} The sets $E, E' \subset \mathbb{Z}^d$ have bounded density, i.e., there exists a constant $c_d>0$ such that, for any ball $B(\bm{x},r)\subset \mathbb{Z}^d$ with $r\ge 1$,
		\(\#\bigl(E \cap B(\bm{x},r)\bigr)\le c_d\, r^d
		\quad \text{and} \quad
		\#\bigl(E' \cap B(\bm{x},r)\bigr)\le c_d\, r^d.
		\)
	\end{enumerate}

	Then, for any fixed \(\varepsilon > 0\), there exists a constant \(C'\) such that, for all pairs of subsets \(E, E' \subset \mathcal{D}_\nn\) with \(\mathrm{dist}(E,E') \geq \varepsilon \hat{n}^{1/d}\), it holds that
	\[
	\alpha\bigl(\sigma(E), \sigma(E')\bigr) \leq C' \, \hat{n}^{-b/d}.
	\]
\end{lemma}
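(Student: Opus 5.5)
The plan is to read the claim directly off the definition of the mixing coefficient $\alpha(r)$ in the statement, which is a supremum over all pairs of index sets $S,S'$ at distance at least $r$. Fix $\varepsilon>0$ and take $E,E'\subset\mathcal D_\nn$ with $\operatorname{dist}(E,E')\ge \varepsilon\hat n^{1/d}$. Since $\mathcal D_\nn$ is finite, both $E$ and $E'$ are finite subsets of $\mathbb Z^d$. So the pair $(\sigma(E),\sigma(E'))$, meaning the $\sigma$-fields generated by $\{Z(\s):\s\in E\}$ and $\{Z(\s):\s\in E'\}$, is admissible in the supremum defining $\alpha(r)$ with $r=\varepsilon\hat n^{1/d}$. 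This gives
\[
\alpha\bigl(\sigma(E),\sigma(E')\bigr)\le \alpha\bigl(\varepsilon \hat n^{1/d}\bigr).
\]
The only fact about $\alpha$ used here is that it is nonincreasing in $r$: the admissible family of pairs shrinks as $r$ grows. I will state this monotonicity explicitly, because the actual distance may exceed $\varepsilon\hat n^{1/d}$.

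Next I apply hypothesis~\ref{1} with $r=\varepsilon\hat n^{1/d}>0$:
\[
\alpha\bigl(\varepsilon\hat n^{1/d}\bigr)\le C\,(\varepsilon \hat n^{1/d})^{-b}=C\varepsilon^{-b}\,\hat n^{-b/d}.
\]
Setting $C'=C\varepsilon^{-b}$, which depends only on $C$, $b$ and the fixed $\varepsilon$ and not on $\nn$, $E$ or $E'$, gives the stated bound uniformly over all admissible pairs.

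I expect no step to be a real obstacle. The one point that needs care is making sure the two notions of mixing coefficient agree: the $\alpha(\sigma(E),\sigma(E'))$ in the conclusion must be the same quantity $\sup|\mathbb P(A\cap B)-\mathbb P(A)\mathbb P(B)|$ restricted to $A\in\sigma(E)$ and $B\in\sigma(E')$, so that the inequality is just a restriction of the supremum.

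Hypotheses~\ref{2} and~\ref{3} (domain diameter of order $\hat n^{1/d}$, bounded point density) are not needed for this inequality. In the proof I will remark that they only matter in later applications, for example when summing such bounds over blocks or counting pairs at a given distance. Hypothesis~\ref{2} also ensures that pairs at distance $\varepsilon\hat n^{1/d}$ can actually exist inside $\mathcal D_\nn$ when $\varepsilon$ is small, so the statement is not vacuous.
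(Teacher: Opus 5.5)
Your proposal is correct and matches the paper's proof. The paper also bounds $\alpha(\sigma(E),\sigma(E'))$ through the decay hypothesis, then uses $\operatorname{dist}(E,E')\ge\varepsilon\hat n^{1/d}$ to obtain $C'=C\varepsilon^{-b}$; it mentions hypotheses 2 and 3 only in words and never actually uses them. The only cosmetic difference is that the paper evaluates the decay bound at $r=\operatorname{dist}(E,E')$ and relies on monotonicity of $r\mapsto r^{-b}$, whereas you apply it at $r=\varepsilon\hat n^{1/d}$ directly. Your version does not even need the monotonicity of $\alpha$ you plan to state, because the definition of $\alpha(r)$ already takes all pairs at distance at least $r$.
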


\begin{proof}
	Let $E, E'$ be subsets of $\mathcal D_\nn$ and fix $\varepsilon>0$. 
	Condition \ref{3} excludes local concentrations of points, while the expansion of the domain $\mathcal D_\nn$ of order $\hat{n}^{1/d}$ (condition \ref{2}) guarantees the existence of subsets separated by a distance of the same order.  
	Then, applying directly hypothesis \ref{1} with \(r = \mathrm{dist}(E,E') \geq \varepsilon \hat{n}^{1/d}\), we have
	\[
	\alpha\bigl(\sigma(E),\sigma(E')\bigr) 
	\leq C \bigl(\mathrm{dist}(E,E')\bigr)^{-b}
	\leq C \bigl(\varepsilon \hat{n}^{1/d}\bigr)^{-b}
	= C'\, \hat{n}^{-b/d},
	\]
	which proves the lemma.
\end{proof}
\begin{remark}
	The lemma states that, under a polynomial decay of the mixing coefficient and regular growth of the domain, two subsets of sites separated by a distance of order $\hat n^{1/d}$ within the domain exhibit weak dependence, with no local clustering effects.
\end{remark}
\begin{remark}
	Lemma \ref{lemamixing} links assumption \ref{mixing} with the mixing coefficient condition in Lemma 3 from \cite{Aneiros-Perez2008Nonparametric}, ensuring its applicability.
\end{remark}
\begin{lemma}\label{lemaultimo}
	Let $\{(Y_i,\Xbf_i^{T},\bm{T}_i^{T})\}$ be a sample satisfying assumptions A3, A4, A5, and A6.
	Suppose the data follow the model \eqref{modeloespacial}. If $\bm \eta_i = \mathbb{E}[\bm{X}_i - \mathbb{E}[\bm{X}_i \mid \T_i]]$, then $\{\varepsilon_i\}$, $\{\bm\eta_i\}$, and $\{\bm\eta_i\varepsilon_i\}$ have zero mean, are strictly stationary, and strongly mixing, with mixing coefficients bounded by $\alpha(r)$. Furthermore, there exist $r>4$, $\delta > 0$, and a constant $C>0$ such that
	\[
	\sup_{i} \mathbb{E}[|\varepsilon_i|^{r}] \le C < \infty,
	\]
	\[\label{momentReta}
	\sup_{i} \mathbb{E}[\|\bm\eta_i\|^{r}] \le C < \infty,
	\]
	and
	\[
	\mathbb{E}[\|\bm\eta_i\varepsilon_i\|^{2+\delta}] < \infty,
	\]
	respectively.
\end{lemma}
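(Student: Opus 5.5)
The plan is to treat each conclusion of Lemma \ref{lemaultimo} separately, working directly from the model \eqref{modeloespacial} and the assumptions. (I read the definition of $\bm\eta_i$ in the statement as $\bm X_i-\hope{\bm X_i\mid\T_i}$, as in assumption \ref{epsilonEta}; the literal expression with an outer expectation is identically zero.)

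\textbf{Zero means.} For the errors, iterated expectations together with the model condition $\hope{\varepsilon_i\mid\bm X_i,\T_i}=0$ give $\hope{\varepsilon_i}=0$. For the covariate residuals, $\hope{\bm\eta_i\mid\T_i}=0$ by construction, so $\hope{\bm\eta_i}=0$. For the product, assumption \ref{epsilonEta} gives $\hope{\bm\eta_i\varepsilon_i}=\hope{\bm\eta_i}\hope{\varepsilon_i}=0$. Alternatively, since $\bm\eta_i$ is $\sigma(\bm X_i,\T_i)$-measurable, one can condition on $(\bm X_i,\T_i)$ directly.

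\textbf{Stationarity and mixing.} Each of $\varepsilon_i=Y_i-\bm X_i^T\betaB-r(\T_i)$, $\bm\eta_i=\bm X_i-g(\T_i)$ with $g=(g_1,\dots,g_p)^T$, and $\bm\eta_i\varepsilon_i$ is a fixed measurable function $\phi(Y_i,\bm X_i,\T_i)$ applied to the observation at site $\s_i$. The functions do not depend on $\s_i$, because $r$ and the $g_j$ are site-invariant by assumptions \ref{fs2}--\ref{ryg}. Stationarity then follows from assumption \ref{mixing}: shifting sites preserves joint laws, and hence the laws of the images.

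For mixing, note that $\sigma(\phi(Z(\s)):\s\in S)\subset\sigma(Z(\s):\s\in S)$. The supremum defining $\alpha(r)$ for the transformed field therefore runs over smaller $\sigma$-fields, and its coefficients are bounded by those of the original field.

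A subtle point is that $\T_i$ involves the responses at the $k$ nearest neighbours. The relevant $\sigma$-fields are therefore generated by $Y$ on sets enlarged by a bounded radius, which shifts $r$ by a constant. I would handle this by interpreting assumption \ref{mixing} as applying to the joint vector field $(Y_i,\bm X_i^T,\T_i^T)$, as the assumption states, so no shift is needed. If instead the mixing is imposed on $Y$ alone, then $\alpha(r-c)\le C'r^{-b}$ for large $r$, which preserves the polynomial rate.

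\textbf{Moments.} Take $r>4$ from assumption \ref{fs2}(ii). By Minkowski's inequality,
\[
\|\varepsilon_i\|_r\le\|Y_i\|_r+\|\bm X_i^T\betaB\|_r+\|r(\T_i)\|_r .
\]
The first term is bounded by \ref{fs2}(ii). The second uses the analogous moment assumption on each $X_{ij}$. For the third, $r(\cdot)$ is Lipschitz on the compact set $\mathcal C$ and hence bounded. Similarly $\|\bm\eta_i\|_r\le\|\bm X_i\|_r+\sup_{\mathcal C}\|g\|$, and the $g_j$ are bounded for the same reason. Uniformity in $i$ follows from the $\sup_i$ in the assumptions.

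For the product, choose $\delta>0$ with $2+\delta\le r/2$; this is possible because $r>4$. Using independence (assumption \ref{epsilonEta}), or Cauchy–Schwarz without it,
\[
\hope{\|\bm\eta_i\varepsilon_i\|^{2+\delta}}\le \hope{\|\bm\eta_i\|^{2(2+\delta)}}^{1/2}\hope{|\varepsilon_i|^{2(2+\delta)}}^{1/2}<\infty .
\]

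\textbf{Main obstacle.} The only delicate step is the mixing inheritance, because of the neighbourhood vector. I would resolve it by the $\sigma$-field inclusion argument together with the bounded-radius enlargement remark above. The remaining steps are routine.
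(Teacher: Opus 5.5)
Your proposal is correct and follows essentially the same route as the paper's proof. Both establish zero means via iterated expectations and A6, and both obtain stationarity and mixing because $\varepsilon_i$, $\bm\eta_i$ and $\bm\eta_i\varepsilon_i$ are site-invariant measurable functions of the jointly stationary, strongly mixing field in A5. The moment bounds likewise come from A3(ii) plus boundedness of the Lipschitz functions $r$ and $g_j$ on the compact $\mathcal C$, with a H\"older/Cauchy--Schwarz step for the product using $2+\delta\le r/2$.

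Your additions are sound refinements rather than a different approach: reading $\bm\eta_i$ as $\bm X_i-\mathbb{E}[\bm X_i\mid\T_i]$, the conditioning argument that gives $\mathbb{E}[\bm\eta_i\varepsilon_i]=0$ without A6, and the neighbourhood-vector caveat.
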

\begin{proof}
	First, we show that $\{\varepsilon_i\}$, $\{\bm\eta_i\}$, and $\{\bm\eta_i\varepsilon_i\}$ have zero mean. Since $\mathbb{E}[\varepsilon_i \mid \Xbf_i, \bm{T}_i] = 0$ by \eqref{modeloespacialhope}, it follows that
	\(
	\mathbb{E}[\varepsilon_i] = \mathbb{E}[\mathbb{E}[\varepsilon_i \mid \Xbf_i, \bm{T}_i]] = 0,
	\)
	and similarly, for each component $\eta_{ij}$,
	\(
	\mathbb{E}[\eta_{ij}] = \mathbb{E}[X_{ij} - \mathbb{E}[X_{ij}\mid \bm{T}_i]] = 0.
	\)
	Independence of $\bm\eta_i$ and $\varepsilon_i$ (A6) then gives $\mathbb{E}[\bm\eta_i \varepsilon_i] = \bm{0}$.

	Stationarity of $\{\bm\eta_i\}$ follows since both $\bm{X}_i$ and $\mathbb{E}[\bm{X}_i\mid \T_i]$ are measurable functions of $\{(\bm{X}_i,\T_i)\}$. Similarly, $\{\varepsilon_i\}$ is stationary as a measurable function of $\{(Y_i, \bm{X}_i^T, \bm{T}_i^T)\}$, and so is $\{\bm\eta_i \varepsilon_i\}$.

	By A5, the joint process $\{(Y_i,\bm{X}_i^T,\bm{T}_i^T)\}$ is strongly mixing with coefficients $\alpha(r) \le C r^{-b}$. Since $\varepsilon_i$ and $\eta_{ij}$ are measurable functions of this process, their mixing coefficients are bounded by $\alpha(r)$.

	Regarding moments, A\ref{fs2}(ii) ensures that, for some $r>4$,
	\(
	\sup_i \mathbb{E}[|Y_i|^r] < \infty, \sup_i \mathbb{E}[|X_{ij}|^r] < \infty.
	\)
	Since $r(\cdot)$ and $g(\cdot)$ are Lipschitz on the compact set $\mathcal C$ (A4), they are uniformly bounded. Consequently, 
	\[
	\sup_i \mathbb{E}[|\varepsilon_i|^r] < \infty, \qquad 
	\sup_i \mathbb{E}[|\eta_{ij}|^r] < \infty.
	\]
	Finally, to show $\mathbb{E}[\|\bm\eta_i \varepsilon_i\|^{2+\delta}] < \infty$, note that
	\[
	\|\bm\eta_i \varepsilon_i\|^{2+\delta} = |\varepsilon_i|^{2+\delta} \|\bm\eta_i\|^{2+\delta} \le C \sum_{k=1}^p |\varepsilon_i|^{2+\delta} |\eta_{ik}|^{2+\delta}.
	\]
	By Hölder's inequality and choosing $0 < \delta < r/2 - 2$, the moments on the right-hand side are uniformly bounded, which proves the claim.
\end{proof}
\begin{lemma}\label{lema4}
    Assume that A1--A5 hold. Then,
\[\label{cotaPeso}
\max_{1\le i,j\le \hat{n}} \left| \omega_{\mathbf{h}}(\T_i,\T_j) \right|
= O\!\left((\hat{n} h_{1\hat{n}}^{d} h_{2\hat{n}}^{k})^{-1}\right)
\quad \text{a.s.},
\]
where $\omega_{\mathbf{h}}(\T_i,\T_j)$ is defined in \eqref{pesos0}.
\end{lemma}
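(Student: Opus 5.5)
Since $\omega_{\mathbf h}(\T_i,\T_j)$ is a ratio whose numerator is a product of two kernels, we bound the numerator above by a constant and the denominator below by a multiple of $\hat n h_{1\hat n}^d h_{2\hat n}^k$, uniformly in $i$ and almost surely. For the numerator, A1 gives that $K_1,K_2$ are Lipschitz on $[0,\infty)$ with support $[0,1]$, hence bounded by $\|K_1\|_\infty\|K_2\|_\infty<\infty$. The whole problem therefore reduces to showing that
\[
D_i \defi \frac{1}{\hat n h_{1\hat n}^d h_{2\hat n}^k}\sum_{\s_j\neq\s_i} K_{1,h_{1\nn}}\big(d(\s_i,\s_j)\big)\,K_2\!\left(\frac{d_m(\T_i,\T_j)}{h_{2\nn}}\right)
\]
satisfies $\liminf_{\hat n}\min_{i} D_i \ge c>0$ almost surely.

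First we handle the expectation. Write $\mathbb E[D_i]$ as a deterministic spatial sum times the kernel expectation in the $\T$ variable. The spatial factor $\sum_j K_1(d(\s_i/\nn,\s_j/\nn)/h_{1\nn})$ counts lattice points in a ball of radius $r_\nn=nh_{1\nn}$. The lower bound $K_1>0$ on $[0,1]$, which follows from A1 because $K_1$ is decreasing with $K_1(1^-)\ge0$, together with \eqref{expansionk_n}, gives that this factor is of exact order $k_\nn\asymp \hat n h_{1\nn}^d$. For the factor in $\T$, we use the density lower bound $f\ge\delta$ on $\mathcal C$ from A2, the Lipschitz property of $f$, and the fact that $K_2$ is positive on $[0,1]$. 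Conditioning on $\T_i$, these give $\mathbb E[K_2(d_m(\T_i,\T_j)/h_{2\nn})\mid\T_i]\ge c\,h_{2\nn}^k$ uniformly in $\T_i\in\mathcal C$, where the $\mathcal C$-compactness argument of Remark \ref{hipAnne} lets us pass through the median dissimilarity. Hence $\mathbb E[D_i]\ge c_0>0$, uniformly in $i$.

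Next we prove the uniform almost sure deviation bound $\max_i|D_i-\mathbb E D_i|\to0$. This is the step we expect to be the main obstacle, because the summands are dependent through the mixing field, and because $\T_i$ itself enters the kernel. To deal with the latter, we replace $D_i$ by $D(\s,\tB)$ evaluated at a fixed pair and cover $\mathcal C$ by $O(\hat n^{\gamma})$ balls of radius $h_{2\nn}^{2}/\hat n$. The Lipschitz continuity of $K_2$ from A1 controls the discretization error at each grid point, and the spatial index already ranges over only $\hat n$ values.

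At each grid point we apply a Bernstein-type blocking inequality for $\alpha$-mixing random fields. This is the tool underlying Theorem 3.1 of \cite{DaboNiang2016} and Lemma 3 of \cite{Aneiros-Perez2008Nonparametric}, and it is made applicable by A5 and Lemma \ref{lemamixing}. The variance of the block sums is $O(\hat n h_{1\nn}^dh_{2\nn}^k)$: the diagonal terms are handled directly, and the covariance terms are controlled by the covariance condition on $f_{\T_i,\T_j}$ in A2 together with the decay $b>4.5d$. Combined with the union bound over the polynomially many grid points, this gives exponential tail bounds of the form $\exp(-c\,\varepsilon^2\hat n h_{1\nn}^dh_{2\nn}^k/\log\hat n)$ plus a mixing remainder. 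Both are summable under the bandwidth condition $\log\hat n/(h_{1\nn}^dh_{2\nn}^k\hat n^{1/4-1/r})\to0$ of Theorem \ref{teo1}, and Borel–Cantelli then yields the almost sure statement. We expect the delicate part to be balancing the block size against $b$, and we would follow the choices made in \cite{DaboNiang2016}.

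Combining the two steps, almost surely for all large $\hat n$ we have $\min_i D_i\ge c_0/2$. Therefore every weight satisfies $|\omega_{\mathbf h}(\T_i,\T_j)|\le 2\|K_1\|_\infty\|K_2\|_\infty/(c_0\,\hat n h_{1\nn}^dh_{2\nn}^k)$, which is the claimed bound.
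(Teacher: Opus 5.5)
Your proposal follows the paper's proof. You bound the kernel product in the numerator by a constant, use the lattice count $a_{\nn}\asymp \hat n h_{1\nn}^d$, and show that the normalized denominator stays uniformly bounded away from zero almost surely; the only difference is that the paper gets this lower bound by citing the uniform a.s.\ convergence of $\hat q$ to a positive limit from Theorem 3.1 of \cite{DaboNiang2016}, whereas you sketch that convergence yourself (expectation bound plus blocking, covering and Borel--Cantelli, under an extra bandwidth condition that the paper also needs implicitly through that citation). One small repair: carry out the expectation bound for a deterministic $\tB\in\mathcal C$, which is exactly what your fixed-pair reduction $D(\s,\tB)$ uses, rather than conditionally on $\T_i$, because A2 gives no lower bound on the conditional law of $\T_j$ given $\T_i$ (nearby neighbourhood vectors can even share coordinates).
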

\begin{proof}
	Recall that the weights defined in \eqref{pesos0} can be written as
	\[\label{pesosFraccion}
	\omega_{\h}(\tB,\T_i) = \frac{\Delta_i(\tB)}{\hat{q}(\tB)},
	\]
	where
	\(
	\Delta_i(\tB) = \frac{1}{a_{\nn}h^k_{2\nn}}K_{1, h_{1\nn}} \left( d(\s, \s_i) \right) K_2\left( {d_m(\bm{t}, \bm{T}_i)}/{h_{2\nn}} \right),
	\)
	and
	\(
	\hat{q}(\tB)=\frac{1}{a_{\nn}h^k_{2\nn}} \sum_{j=1}^{\hat{n}} K_{1, h_{1\nn}} \left( d(\s, \s_j) \right) K_2\left( {d_m(\bm{t}, \bm{T}_j)}/{h_{2\nn}} \right),
	\)
whit $a_{\nn}\defi  \sum_{j=1}^{\hat{n}}K_{1, h_{1\nn}} \left( d(\s, \s_j) \right). $
	We bound separately the numerator and the denominator.

	\textit{Step 1 (numerator).} By A\ref{K}, the kernels are bounded, i.e., $m_i \le K_i(u) \le M_i$ and 
    	\begin{equation}
		\label{cotaan}
	m_1\sum_{\s_i \in \mathcal{D}_\nn} \mathbb{I}_{\{d(\s, \s_i) \le {n}h_{1\nn}\}}\le 
	a_\nn \le
	M_1\sum_{\s_i \in \mathcal{D}_\nn} \mathbb{I}_{\{d(\s, \s_i) \le {n}h_{1\nn}\}}.
\end{equation} 
    Hence,
	\(
	\Delta_i(\tB) \le {M_1 M_2}/{a_{\nn}h^k_{2\nn}}.
	\)
	Using \eqref{cotaan}, \eqref{expansionk_n}, $r_\nn = n h_{1\nn}$, $\beta < d$ y $\hat{n} = n^d$ and the domain growth assumption, we have
	\(
	a_{\nn} \asymp \hat{n}h^d_{1\nn},
	\)
	which yields
	\[
	\Delta_i(\tB) = O\!\left(\frac{1}{\hat{n}h^d_{1\nn}h^k_{2\nn}}\right)
	\quad \text{a.s.}
	\]

	\textit{Step 2 (denominator).} From the proof of Theorem 3.1 (from \cite{DaboNiang2016}), $\hat q(\tB)$ converges uniformly almost surely to $q(\tB)$. Since $\inf_{\tB \in \mathcal C} |q(\tB)| \ge \delta > 0$ (A\ref{fs}), it follows that, for $\hat{n}$ large enough,
	\(
	\inf_{\tB\in\mathcal C} |\hat q(\tB)| \ge {\delta}/{2}
	\quad \text{a.s.}
	\)
	and therefore $|\hat q(\tB)|^{-1} \le 2/\delta$.

	Combining both bounds, we obtain $|\omega_{\h}(\tB,\T_i)|
	\le C \, (\hat{n}h^d_{1\nn}h^k_{2\nn})^{-1}
	\quad \text{a.s.}, $ which completes the proof.
\end{proof}

\begin{lemma}\label{lema5}
Under assumptions H1--H6, suppose that the sample 
$\{(X_{i1},\ldots,X_{ip},\bm{T}_i)\}$ 
arises from a strictly stationary $\alpha$-mixing process whose mixing coefficient satisfies 
$\alpha(r)\le c r^{-b}$, with $b>1$. Moreover, assume that there exists $r>4$ such that, for $ j=1,\dots,p, $ 
\(
\max_{1\le i \le n}\mathbb{E}(|X_{ij}|^r) \le C < \infty.
\)
Additionally, if $h_{2\hat{n}} \to 0$ and 
\(
\frac{\log \hat{n}}{\hat{n} h^d_{1\hat{n}} h^k_{2\hat{n}}} \to 0
\quad \text{as } \hat{n} \to \infty,
\)
then
\[
\hat{n}^{-1}\tilde{\mathbb{X}}^{\top}\tilde{\mathbb{X}} \longrightarrow \mathbf{B}
\quad \text{almost surely.}
\]
\end{lemma}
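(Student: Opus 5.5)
The plan is to decompose each centered covariate into its population innovation plus an estimation error, and then show that only the innovation part survives after normalization by $\hat n$. For each $i$ and $j$, write
\[
\tilde X_{ij} = \eta_{ij} + \bigl(g_j(\T_i) - \hopehat{X_{ij}\mid \T_i}\bigr) \defi \eta_{ij} + \bar g_{ij},
\]
where $g_j(\tB)=\hope{X_{ij}\mid \T_i=\tB}$ as in A\ref{ryg} and $\eta_{ij}$ is as in A\ref{epsilonEta}. In matrix form this reads $\tilde{\mathbb{X}} = \bm{\eta} + \bar{\mathbb{G}}$, with $\bar{\mathbb{G}}$ the $\hat n\times p$ matrix of entries $\bar g_{ij}$. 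Consequently,
\[
\hat n^{-1}\tilde{\mathbb{X}}^T\tilde{\mathbb{X}} = \hat n^{-1}\bm{\eta}^T\bm{\eta} + \hat n^{-1}\bm{\eta}^T\bar{\mathbb{G}} + \hat n^{-1}\bar{\mathbb{G}}^T\bm{\eta} + \hat n^{-1}\bar{\mathbb{G}}^T\bar{\mathbb{G}}.
\]
The argument reduces to showing that the first term converges almost surely to $\mathbf B$ and that the remaining three terms vanish almost surely.

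For the leading term $\hat n^{-1}\sum_i \bm\eta_i\bm\eta_i^T$, Lemma \ref{lemaultimo} applies: $\{\bm\eta_i\}$ is strictly stationary, strongly mixing with coefficients bounded by $\alpha(r)\le Cr^{-b}$, and satisfies $\sup_i\mathbb E\|\bm\eta_i\|^r<\infty$ with $r>4$. Hence each entry $\eta_{ij}\eta_{il}$ is a stationary mixing field with finite moments of order $r/2>2$. A strong law of large numbers for $\alpha$-mixing random fields, of the kind used in \cite{Aneiros-Perez2008Nonparametric}, then gives almost sure convergence to $\mathbb E[\bm\eta_1\bm\eta_1^T]=\mathbf B$. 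Such a law is available, for instance, via a Kolmogorov-type criterion, using the covariance inequality $|\mathrm{Cov}(\eta_{ij}\eta_{il},\eta_{kj}\eta_{kl})|\le C\alpha(\|\s_i-\s_k\|)^{1-4/r}$ together with summability over $\mathbb Z^d$, which follows from $b>1$ after the reduction of Lemma \ref{lemamixing}.

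The cross and quadratic terms are handled by Cauchy--Schwarz, which reduces everything to controlling $\hat n^{-1}\sum_i \bar g_{ij}^2$. That quantity is bounded by $\max_i|\bar g_{ij}|^2$, and it suffices to show that this maximum is $o(1)$ almost surely. For this step I would invoke the uniform almost sure consistency of the Nadaraya--Watson estimator with the double kernel \eqref{pesos0}, from Theorem 3.1 of \cite{DaboNiang2016}. The hypotheses of that theorem are verified as in Remark \ref{hipAnne}, applied to each component $X_{ij}$ in place of $Y_i$, as allowed by the analogue of A\ref{fs2} for $X_{ij}$. That result gives, uniformly over $\tB\in\mathcal C$,
\[
\sup_{\tB\in\mathcal C}\bigl|\hopehat{X_{ij}\mid\T=\tB}-g_j(\tB)\bigr| = O(h_{2\hat n}) + O\Bigl(\sqrt{\log\hat n/(\hat n h_{1\hat n}^d h_{2\hat n}^k)}\Bigr)\quad\text{a.s.},
\]
and both terms tend to zero under the stated bandwidth conditions. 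The bias part comes from the Lipschitz property in A\ref{ryg}, and the stochastic part uses Lemma \ref{lema4} to bound the individual weights. Since $\T_i\in\mathcal C$, the maximum over $i$ is dominated by this supremum, so $\hat n^{-1}\bar{\mathbb G}^T\bar{\mathbb G}\to 0$ almost surely. The cross terms are then bounded by $(\hat n^{-1}\sum\eta_{ij}^2)^{1/2}(\hat n^{-1}\sum\bar g_{il}^2)^{1/2}=O(1)\cdot o(1)$ almost surely.

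I expect the main obstacle to be the uniform consistency step. The estimator evaluates $\hopehat{X_{ij}\mid\T_i}$ at the random, data-dependent points $\T_i$, and it is leave-one-out: the point $\s=\s_i$ is excluded and the spatial kernel is centred at $\s_i$. The cited uniform result is stated for a fixed target location, so it must be made to hold uniformly over both $\tB\in\mathcal C$ and the site index. My approach would be to use a covering of $\mathcal C$ (and of the rescaled site set $\s/\nn$) by $O(\hat n^{c})$ balls, apply the Lipschitz continuity of $K_1$ and $K_2$ from A\ref{K} to pass between grid points, and then use a Bernstein-type exponential inequality for mixing fields at each grid point. This is the step where the rate $\log\hat n/(\hat n h_{1\hat n}^dh_{2\hat n}^k)\to0$ and the polynomial mixing decay are actually consumed. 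Removing the single term at $\s_i$ changes the weights only by $O((\hat nh_{1\hat n}^dh_{2\hat n}^k)^{-1})$ by Lemma \ref{lema4}, which is negligible.
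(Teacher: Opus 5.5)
Your proposal follows essentially the same route as the paper. Both arguments decompose $\tilde X_{ij}$ into $\eta_{ij}$ plus the Nadaraya--Watson error, obtain $\mathbf{B}$ from a strong law for the stationary mixing field $\{\eta_{ij}\eta_{il}\}$, and remove the three remaining terms with Cauchy--Schwarz and the uniform consistency result of \cite{DaboNiang2016}.

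For the strong law, the paper cites Theorem~2(b) of \cite{Devroye2010} under an $L\log^+L$ moment rather than using your covariance argument. Your summability claim over $\mathbb{Z}^d$ is wrong as stated: it needs $b(1-4/r)>d$, not $b>1$, and Lemma~\ref{lemamixing} does not supply this. The strong law still holds, for example from the multiparameter ergodic theorem, since strong mixing implies ergodicity.

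Your extra discussion of evaluating the estimator at the data points $\T_i$, with a leave-one-out, site-centred spatial kernel, is more careful than the paper. The paper takes that step directly from Observation~4 of the same reference without comment.
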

\begin{proof}
Since $ \tilde{\mathbb{X}}  = {\mathbb{X}}-\bm{W}_{\h}{\mathbb{X}} $, the {$ kj $}-th entry is given by
\begin{equation}\label{graya}
\tilde{X}_{kj}
= X_{kj} - \dsum{l}{1}{\hat{n}}  \omega_{\h}(\bm{T}_k,\bm{T}_{l} )X_{lj} 
\doteq \eta_{kj} - \bar{g}_{j}(\bm{T}_k),
\end{equation}
where $\eta_{kj} =  X_{kj} -\hope{X_{kj}|\bm{T}_k}$ and
$
\bar{g}_{j}(\bm{T}_k)
= \hope{X_{kj}|\bm{T}_k} - \dsum{l}{1}{\hat{n}}  \omega_{\h}(\bm{T}_k,\bm{T}_{l} )X_{lj}.
$

Thus, the {$js$}-th entry of $\hat{n}^{-1}\tilde{\mathbb{X}}^T \tilde{\mathbb{X}}$ can be written as
\begin{align}
\bigg(\hat{n}^{-1}\tilde{\mathbb{X}}^T \tilde{\mathbb{X}}\bigg)_{js} 
&=\hat{n}^{-1}\sum_{k=1}^{\hat{n}}  \tilde{X}_{kj}\tilde{X}_{ks}\\
&=\hat{n}^{-1}\bigg(\sum_{k=1}^{\hat{n}} \eta_{kj}\eta_{ks}
- \sum_{k=1}^{\hat{n}} \bar{g}_{j}(\bm{T}_k)\,\eta_{ks} 
- \sum_{k=1}^{\hat{n}} \bar{g}_{s}(\bm{T}_k)\,\eta_{kj} + \sum_{k=1}^{\hat{n}}  \bar{g}_{j}(\bm{T}_k)\,\bar{g}_{s}(\bm{T}_k) \bigg).
\end{align}

We first analyze the leading term. Since $\{\bm\eta_k\}$ is strictly stationary, so is $\{\eta_{kj}\eta_{ks}\}$ for Lemma \ref{lemaultimo}. Moreover, the strong mixing property is preserved under measurable transformations, hence $\{\eta_{kj}\eta_{ks}\}$ is also strongly mixing. 
Next, we verify the moment condition. Since
\(
|\eta_{ij}\eta_{is}|\le \|\bm\eta_i\|^2,
\)
it follows that
\(
\hope{|\eta_{ij}\eta_{is}|^{r/2}}
\le
\hope{\|\bm\eta_i\|^{r}}.
\)
By \eqref{momentReta}, there exists $r>4$ such that
\(
\hope{\|\bm\eta_i\|^{r}} \le C < \infty,
\)
and therefore $\hope{|\eta_{ij}\eta_{is}|^{r/2}}<\infty$. This implies
\[
\hope{|\eta_{ij}\eta_{is}|\log^+|\eta_{ij}\eta_{is}|}<\infty,
\]
so that, by Theorem 2(b) from \cite{Devroye2010}, $hat{n}^{-1}\sum_{k=1}^{\hat{n}} \eta_{kj}\eta_{ks}
\;\longrightarrow\; \textbf{B}_{js}
\quad\text{a.s.} $ In particular, $
\hat{n}^{-1}\sum_{k=1}^{\hat{n}} \eta_{ks}^2 = O(1)
\quad\text{a.s.} $

We now control the remaining terms. As in the definition for $\bar{g}_{j}(T_k)$ the weight is given by \eqref{pesos0},  $\bm{T}_k$ are the covariates, and that each $X_{kj}$ represents the variable to be predicted and taking into acount assumptions A1--A5, Observation~4 in \cite{DaboNiang2016} yields and
\begin{equation}\label{30graya}
\max_{1\le j \le p} \max_{1\le k \le \hat{n}} 
\left| \overline{g}_{j}(\T_k) \right|
= O(h_{2\nn}) 
+ O\!\left(\frac{\log \hat{n}}{\hat{n} h^d_{1\nn}h^k_{2\nn}} \right)
\quad \text{a.s.}
\end{equation}
which converges to zero under the bandwidth conditions. Hence, $
\hat n^{-1}\sum_{k=1}^{\hat n}\bar g_j(\T_k)^2 \to 0.$

Applying Cauchy--Schwarz,
\[
\left|
\hat n^{-1}\sum_{k=1}^{\hat n} \bar g_j(\T_k)\eta_{ks}
\right|
\le
\left(
\hat n^{-1}\sum_{k=1}^{\hat n}\bar g_j(\T_k)^2
\right)^{1/2}
\left(
\hat n^{-1}\sum_{k=1}^{\hat n}\eta_{ks}^2
\right)^{1/2},
\]
which implies that $\hat n^{-1}\sum_{k=1}^{\hat n} \bar g_j(\T_k)\eta_{ks}
\longrightarrow 0.$ The same argument applies to the remaining terms, which are therefore negligible. This concludes the proof.
\end{proof}

\subsection{Proofs}
\begin{proof}[Proof of Theorem \ref{teo1}]
Recall that \(\hat \betaB_{\h} = (\tilde{\mathbb{X}}^T\tilde{\mathbb{X}})^{-1} \tilde{\mathbb{X}}^T \tilde\YB\) where \(\tilde\YB = \bm{Y}-\bm{W}_{\h} \bm{Y}\) and \(\tilde{\mathbb{X}}= {\mathbb{X}}-\bm{W}_{\h}{\mathbb{X}}\). Then,
\begin{align}
\sqrt{n}(\hat \betaB_{\h} -\betaB ) &= 	\sqrt{n}\big((\tilde{\mathbb{X}}^T\tilde{\mathbb{X}})^{-1} \tilde{\mathbb{X}}^T \tilde\YB-\betaB\big)
\label{teo1.1}
= \sqrt{n}(\tilde{\mathbb{X}}^T\tilde{\mathbb{X}})^{-1}\big(\tilde{\mathbb{X}}^T (\bm{Y}-\bm{W}_{\h} \bm{Y})-\tilde{\mathbb{X}}^T\tilde{\mathbb{X}}\betaB\big).
\end{align}
Moreover, defining $\bm{r} \defi (r_1,\dots, r_{\hat{n}})^T$ with $r_i = r(\T_i)$ and $\bm{\varepsilon} = (\varepsilon_1, \cdots, \varepsilon_{\hat{n}})$, model \eqref{modeloespacial} can be written in matrix form as \( \YB = \mathbb{X}\betaB + \bm{r} + \bm{\varepsilon} \).
Replacing the model in \eqref{teo1.1}, we obtain
\begin{align}
\tilde{\mathbb{X}}^T (\bm{Y}-\bm{W}_{\h} \bm{Y})-\tilde{\mathbb{X}}^T\tilde{\mathbb{X}}\betaB
\label{descomposicion}
&= \tilde{\mathbb{X}}^T{\mathbb{X}}\betaB- \tilde{\mathbb{X}}^T\bm{W}_{\h}\mathbb{X}\betaB   -\tilde{\mathbb{X}}^T\tilde{\mathbb{X}}\betaB\\
\label{rdesc}
& \qquad + \tilde{\mathbb{X}}^T\bm{r}- \tilde{\mathbb{X}}^T\bm{W}_{\h}\bm{r} \\
& \qquad + \tilde{\mathbb{X}}^T\bm{\varepsilon} - \tilde{\mathbb{X}}^T\bm{W}_{\h}\bm{\varepsilon}.
\end{align}
Then, since
\(
\tilde{\mathbb{X}}^T\mathbb{X}\betaB - \tilde{\mathbb{X}}^T\bm{W}_{\h}\mathbb{X}\betaB =
\tilde{\mathbb{X}}^T(\mathbb{X}-\bm{W}_{\h}\mathbb{X})\betaB
=  \tilde{\mathbb{X}}^T\tilde{\mathbb{X}}\betaB,
\)
it follows from \eqref{descomposicion} that
\(
\tilde{\mathbb{X}}^T\mathbb{X}\betaB - \tilde{\mathbb{X}}^T\bm{W}_{\h}\mathbb{X}\betaB -  \tilde{\mathbb{X}}^T\tilde{\mathbb{X}}\betaB=0.
\)
On the other hand, defining \(\tilde{\bm{r}}  \defi \bm{r}-\bm{W}_{\h} \bm{r}\), we can write \eqref{rdesc} as
\(
\tilde{\mathbb{X}}^T\bm{r}- \tilde{\mathbb{X}}^T\bm{W}_{\h}\bm{r} = \tilde{\mathbb{X}}^T\tilde{\bm{r}}.
\)
Using these two equalities, we obtain
\[
\tilde{\mathbb{X}}^T (\bm{Y}-\bm{W}_{\h} \bm{Y})-\tilde{\mathbb{X}}^T\tilde{\mathbb{X}}\betaB = \tilde{\mathbb{X}}^T\tilde{\bm{r}}+\tilde{\mathbb{X}}^T\bm{\varepsilon} - \tilde{\mathbb{X}}^T\bm{W}_{\h}\bm{\varepsilon}.
\]
Finally, in \eqref{teo1.1}, it follows that
\begin{align}
\sqrt{\hat{n}}(\hat \betaB_{\h} -\betaB ) &= 
\sqrt{\hat{n}}(\tilde{\mathbb{X}}^T\tilde{\mathbb{X}})^{-1}
\big(\tilde{\mathbb{X}}^T\tilde{\bm{r}}-
\tilde{\mathbb{X}}^T\bm{W}_{\h}\bm{\varepsilon}+
\tilde{\mathbb{X}}^T\bm{\varepsilon} \big)\\ \label{31}
&\defi(\hat{n}^{-1}\tilde{\mathbb{X}}^T\tilde{\mathbb{X}})^{-1}\hat{n}^{-\frac{1}{2}}(\bm{S}_{n1}-\bm{S}_{n2}+\bm{S}_{n3}).		
\end{align}
For the study of the asymptotic behavior of $\bm{S}_{n1}$, $\bm{S}_{n2}$ and $\bm{S}_{n3}$, vectors in $\rr^p$, we first consider the decomposition of the $ij$-th element of the matrix $ \tilde{\mathbb{X}} $, which appears in all terms. To this end, adding and subtracting $ \tilde{g}_{j}(\bm{T}_i) \defi \hope{X_{ij}|\bm{T}_i} -\dsum{k}{1}{\hat{n}}  \omega_{\h}(\bm{T}_i,\bm{T}_{k} )\hope{X_{kj}|\bm{T}_k} $ and using $\eta_{ij} = X_{ij} - \hope{X_{ij}|\T_i}$ defined in H\ref{epsilonEta}, we obtain
\begin{align}
\tilde{X}_{ij} &= X_{ij} -\dsum{k}{1}{\hat{n}} \omega_{\h}(\bm{T}_i,\bm{T}_{k} )X_{kj}\\
&= \hope{X_{ij}|\bm{T}_i} -\dsum{k}{1}{\hat{n}}  \omega_{\h}(\bm{T}_i,\bm{T}_{k} )\hope{X_{kj}|\bm{T}_k}
+	X_{ij}-\hope{X_{ij}|\bm{T}_i} - \dsum{k}{1}{\hat{n}}   \omega_{\h}(\bm{T}_i,\bm{T}_{k} )\eta_{kj}\\
\label{gPelito}
&= \tilde{g}_{j}(\bm{T}_i)+\eta_{ij}- \dsum{k}{1}{\hat{n}} \omega_{\h}(\bm{T}_i,\bm{T}_{k} )\eta_{kj}.
\end{align}
Hence, the $j$-th component of 
$
\bm{S}_{n1} = \tilde{\mathbb{X}}^T\tilde{\bm{r}}
= \dsumi\tilde{\bm{X}}_i\tilde{r}(\T_i)
$
admits the decomposition
\begin{align}
S_{n1,j} &= \dsumi \tilde{g}_{j}(\bm{T}_i)\tilde{r}(\T_i)+
\dsumi\eta_{ij}\tilde{r}(\T_i)-
\dsumi\tilde{r}(\T_i)\bigg(\dsum{k}{1}{\hat{n}} \omega_{\h}(\bm{T}_i,\bm{T}_{k} )\eta_{kj}\bigg)\defi S_{n1,j1} + S_{n1,j2} - S_{n1,j3}. 
\end{align}
Analogously, for the $j$-th component of 
$\bm{S}_{n2}$ and $\bm{S}_{n3}$.

In what follows, Lemma 3 from \cite{Aneiros-Perez2008Nonparametric} is applied to 
$\{\varepsilon_k\}$ and $\{\eta_{kj}\}$. To this end, we verify that both processes satisfy its assumptions.
From assumptions A3, A4, A5 and A6, together with Lemma \ref{lemaultimo}, it follows that the sequences $\{\varepsilon_i\}$ and $\{\bm\eta_i\}$ have zero mean, are strictly stationary, and are strongly mixing with mixing coefficients bounded by $\alpha(r)$. Moreover, Lemma \ref{lemaultimo} ensures the existence of $r>4$ and a constant $C>0$ such that
\(
\max_{1 \le i \le n} \hope{|\varepsilon_i|^{r}} \le C < \infty,
\)
and
\(
\max_{1 \le i \le n} \hope{\|\bm\eta_i\|^{r}} \le C < \infty.
\)
On the other hand, Lemma~\ref{lemamixing} ensures that
\[
\sum_{\hat{n}=1}^\infty \hat{n}^{\frac{5+4\gamma}{4(1-\gamma)}}\alpha(\hat{n})
\leq \sum_{\hat{n}=1}^\infty \hat{n}^{\frac{5+4\gamma}{4(1-\gamma)}} \hat{n}^{-b/d},
\]
and, by assumptions~A6, $b>4.5\,d$. Choosing $0.5<\gamma<1-\frac{9}{4b}$, the series is convergent. Therefore, both processes satisfy all the conditions required for the application of Lemma 3 from \cite{Aneiros-Perez2008Nonparametric}.
Moreover, Lemma \ref{lema4} implies that
\(
\max_{1\le i,j\le \hat{n}} |\omega_{\h}(\T_i,\T_j)| = O((\hat{n}h^d_{1\nn}h^k_{2\nn})^{-1}), \text{   a.s.}
\)
Thus, applying Lemma 3 from \cite{Aneiros-Perez2008Nonparametric} with
\(a_{ik} = \omega_{\h}(\T_i, \T_k)\),
\(a_{\nn} = (\hat{n}h^d_{1\nn}h^k_{2\nn})^{-1}\),
\(V_k = \varepsilon_k\), and
\(0.5 < \gamma < 1 - 9/(4b)\), we obtain
\[ \label{32}
\max_i \left| \sum_{k=1}^{\hat{n}} \omega_{\h}(\T_i, \T_k)\varepsilon_k \right|
= O\left(( h^d_{1\nn}h^k_{2\nn})^{-1} \hat{n}^{-1/2 + 1/r} \log \hat{n} \right)
\quad \text{a.s.}
\]
Under the same conditions, taking \(V_k = \eta_{kj}\), we obtain
\[\label{33eta}
\max_i \left| \sum_{k=1}^{\hat{n}} \omega_{\h}(\T_i, \T_k)\eta_{kj} \right|
= O\left( ( h^d_{1\nn}h^k_{2\nn})^{-1} \hat{n}^{-1/2 + 1/r} \log \hat{n} \right)
\quad \text{a.s.}
\]
both for $r>4$.

On the other hand, using the equality in \eqref{gPelito}, the definition of $\eta_{kj}$ and $\overline{g}_{j}(\T_i)$, we have
\begin{align}
\tilde{g}_{j}(\bm{T}_i) 
&= \overline{g}_{j}(\T_i) +\dsum{k}{1}{\hat{n}} \omega_{\h}(\bm{T}_i,\bm{T}_{k} )\eta_{kj},
\end{align}
where $\overline{g}(\T_i)$ is defined in \eqref{graya}.
Finally, combining \eqref{30graya} and \eqref{33eta}, we obtain
\begin{align}\label{35}
\max_{i,j} |\tilde g_{j}(\T_i)|
&= O(h_{2\nn})
+ O\left( (h^d_{1\nn}h^k_{2\nn})^{-1} \hat{n}^{-1/2 + 1/r} \log \hat{n} \right)
\quad \text{a.s.}
\end{align}
Similarly, we bound \(\max_{i} |\tilde{r}(\T_i)|\). Recall that \(\tilde{\bm{r}} =\bm{r}-\bm{W}_{\h} \bm{r}\), so the $i$-th element of \(\tilde{\bm{r}}\) is given by
\(
\tilde{r}(\T_i)= r(\T_i)-\sum_{k=1}^{\hat{n}} \omega_{\h}(\T_i, \T_k)r(\T_k).
\)
Moreover, from the model definition in \eqref{modeloespacial}, we have $r(\T_k) = Y_k - \bm{X}^T_{k}\betaB - \varepsilon_k$. On the other hand, from the conditional expectation in \eqref{poblacional}, $r(\T_k) = \hope{Y_i - \bm{X}^T_{i}\betaB|\T_i}$. Hence, the previous expression can be written as
\[
\tilde{r}(\T_i) = r(\T_i)-\sum_{k=1}^{\hat{n}} \omega_{\h}(\T_i, \T_k)r(\T_k)
= \hope{Y_i - \bm{X}^T_{i}\betaB|\T_i} -\sum_{k=1}^{\hat{n}} \omega_{\h}(\T_i, \T_k)\big(Y_k - \bm{X}^T_{k}\betaB - \varepsilon_k\big)
\]
Adding and subtracting $ \betaB^T\sum_{k=1}^{\hat{n}} \omega_{\h}(\T_i, \T_k)\hope{\bm{X}_k |\T_k}$, we obtain
\begin{align}
\label{m}
\tilde{r}(\T_i) 
&= \hope{Y_i |\T_i} -\sum_{k=1}^{\hat{n}} \omega_{\h}(\T_i, \T_k)Y_k \\
\label{eta}
&\quad + \betaB^T\sum_{k=1}^{\hat{n}} \omega_{\h}(\T_i, \T_k)\bigg(\bm{X}_{k}-\hope{\bm{X}_k |\T_k}\bigg)\\
\label{gPelito2}
&\quad - \betaB^T\bigg(\hope{\bm{X}_i |\T_i} - \sum_{k=1}^{\hat{n}} \omega_{\h}(\T_i, \T_k)\hope{\bm{X}_k |\T_k}\bigg)\\
\label{epsilon}
&\quad + \sum_{k=1}^{\hat{n}} \omega_{\h}(\T_i, \T_k)\varepsilon_k.		
\end{align}
If in \eqref{m} the weights $\omega_{\h}(\T_i, \T_k)$ are given by \eqref{pesos0}, $\bm{T}_k$ are the covariates, and each $Y_{k}$ denotes the response variable, then under assumptions A1--A6, and by Remark \ref{hipAnne}, we directly apply Observation~4 of \cite{DaboNiang2016}, obtaining the same orders as in \eqref{30graya}, namely,
\begin{equation}\label{rraya}
\max_{1\le i \le \hat{n}} 
\left| \hope{Y_i |\T_i} -\sum_{k=1}^{\hat{n}} \omega_{\h}(\T_i, \T_k)Y_k \right|
= O(h_{2\nn}) 
+ O\!\left(\frac{\log \hat{n}}{\hat{n} h^d_{1\nn}h^k_{2\nn}} \right).
\end{equation}
In \eqref{eta}, using \eqref{33eta}, for each $j$-th component of the vector $\sum_{k=1}^{\hat{n}} \omega_{\h}(\T_i, \T_k)\big(\bm{X}_{k}-\hope{\bm{X}_k |\T_k}\big)$, we have
\begin{align}
\label{eta2}
\bigg(\sum_{k=1}^{\hat{n}} \omega_{\h}(\T_i, \T_k)\big(\bm{X}_{k}-\hope{\bm{X}_k |\T_k}\big)\bigg)_j 
= O\left( ( h^d_{1\nn}h^k_{2\nn})^{-1} \hat{n}^{-1/2 + 1/r} \log \hat{n} \right)
\quad \text{a.s.}
\end{align}
In \eqref{gPelito2}, using \eqref{35}, we obtain
\begin{align}
\label{gpelito3}
\bigg(\hope{\bm{X}_i |\T_i} - \sum_{k=1}^{\hat{n}} \omega_{\h}(\T_i, \T_k)\hope{\bm{X}_k |\T_k}\bigg)_j
= O(h_{2\nn})
+ O\left( (h^d_{1\nn}h^k_{2\nn})^{-1} \hat{n}^{-1/2 + 1/r} \log \hat{n} \right)
\quad \text{a.s.}
\end{align}
Finally, combining \eqref{32}, \eqref{gpelito3}, \eqref{eta2}, and \eqref{rraya}, we obtain
\begin{align}
\max_i |\tilde r_\textbf{h}(\T_i)|
&= O(h_{2\nn})
+ O\left( \sqrt{\frac{\log \hat{n}}{n h^d_{1\nn}h^k_{2\nn}}} \right)
+ O\left(  (h^d_{1\nn}h^k_{2\nn})^{-1} \hat{n}^{-1/2 + 1/r} \log \hat{n} \right)\\
\label{34}
&= O(h_{2\nn})
+ O\left(  (h^d_{1\nn}h^k_{2\nn})^{-1} \hat{n}^{-1/2 + 1/r} \log \hat{n} \right)
\quad \text{a.s.}
\end{align}

We next analyze the nine terms corresponding to the $j$-th components of $\bm{S}_{n1}$, $\bm{S}_{n2}$ and $\bm{S}_{n3}$.

For $S_{n1,j1}$, using \eqref{34} and \eqref{35}, we obtain
\begin{align}
\hat{n}^{-\frac{1}{2}}S_{n1,j1}
&= \hat{n}^{-\frac{1}{2}}\sum_{i=1}^{\hat{n}} \tilde{g}_{j}(\T_i)\,\tilde{r}_h(\T_i) \leq  \hat{n}^{\frac{1}{2}} \max_i |\tilde{g}_{j}(\T_i)| \max_i |\tilde{r}(\T_i)| 
= O\left((\hat{n} h_{2\nn}^{4})^{\frac{1}{2}}\right) 
+ O\!\left(\bigg(\frac{\log \hat{n}}{ h^d_{1\nn}h^k_{2\nn} \hat{n}^{\frac{1}{4}-\frac{1}{r}}}\bigg)^2\right)
\quad \text{a.s.}
\end{align}
For $S_{n1,j2}$, applying Lemma 3 from \cite{Aneiros-Perez2008Nonparametric} with \( a_{ik} = \tilde{r}_\h(T_k)\), \(a_{\nn} = h_{2\nn} +  (h^d_{1\nn}h^k_{2\nn})^{-1} \hat{n}^{-\frac{1}{2} + \frac{1}{r}} \log \hat{n}\) (see \eqref{34}), \( V_k = \eta_{kj} \), and \( 0.5 < \gamma < 1 - 9/(4b) \), we obtain
\[
\hat{n}^{-\frac{1}{2}}S_{n1,j2}=  \hat{n}^{-\frac{1}{2}}\sum_{k=1}^{\hat{n}} \tilde{r}(T_k)\,\eta_{kj}= O\!\left(( h^4_{2\nn} \hat{n})^{\frac{1}{4}}  \frac{{\log \hat{n} }}{\hat{n}^{\frac{1}{4}-\frac{1}{r}}}
+  \bigg(\frac{\log \hat{n}}{ (h^d_{1\nn}h^k_{2\nn} )^{\frac{1}{2}}\hat{n}^{\frac{1}{4}-\frac{1}{r}}}\bigg)^2 \right)
\quad \text{a.s.}
\]
For $S_{n1,j3}$, using \eqref{33eta} and \eqref{34}, we obtain
\begin{align}
\hat{n}^{-\frac{1}{2}}S_{n1,j3} 
&= \hat{n}^{-\frac{1}{2}}\sum_{i=1}^{\hat{n}} \tilde{r}(\T_i)
\left( \sum_{k=1}^{\hat{n}} \omega_{\h}(\T_i, \T_k) \eta_{kj} \right)\\
&\leq \hat{n}^{\frac{1}{2}}\max_i |\tilde{r}(\T_i)| \max_i \left| \sum_{k=1}^{\hat{n}} \omega_{\h}(\T_i, \T_k)\eta_{kj} \right| 
=  O\bigg(( h^4_{2\nn} \hat{n})^{\frac{1}{4}}  \frac{\log \hat{n}}{h^d_{1\nn}h^k_{2\nn}\hat{n}^{\frac{1}{4}-\frac{1}{r}} } 
+ \bigg(\frac{\log \hat{n}}{h^d_{1\nn}h^k_{2\nn}\hat{n}^{\frac{1}{4}-\frac{1}{r}}} \bigg)^{2}   \bigg) 
\quad \text{a.s.}
\end{align}
Analogously, using \eqref{32} and \eqref{35}, we obtain for $S_{n2,j1}$
\begin{align}
\hat{n}^{-\frac{1}{2}}S_{n2,j1} 
&= \hat{n}^{-\frac{1}{2}}\sum_{i=1}^{\hat{n}} \tilde{g}_{j}(\T_i)
\left( \sum_{k=1}^{\hat{n}} \omega_{\h}(\T_i, \T_k) \varepsilon_k \right)\\ 
&\leq \hat{n}^{\frac{1}{2}}\max_i |\tilde{g}_{j}(\T_i)| \max_i \left| \sum_{k=1}^{\hat{n}} \omega_{\h}(\T_i, \T_k)\varepsilon_k \right|
=  O\bigg(( h^4_{2\nn} \hat{n})^{\frac{1}{4}}  \frac{\log \hat{n}}{h^d_{1\nn}h^k_{2\nn}\hat{n}^{\frac{1}{4}-\frac{1}{r}} } 
+ \bigg(\frac{\log \hat{n}}{h^d_{1\nn}h^k_{2\nn}\hat{n}^{\frac{1}{4}-\frac{1}{r}}} \bigg)^{2}   \bigg) 
\quad \text{a.s.}
\end{align}
Applying Lemma 3 from \cite{Aneiros-Perez2008Nonparametric} with \( a_{ik} = \sum_{l=1}^{\hat{n}}  \omega_{\h}(\T_k, \T_l)\varepsilon_l \),  
\( a_{\nn} =  (h^d_{1\nn}h^k_{2\nn})^{-1} \hat{n}^{-\frac{1}{2}+\frac{1}{r}} \log \hat{n} \),
\( V_k = \eta_{kj} \) and \( 0.5 < \gamma < 1 - 9/(4b) \), we obtain
\begin{align}
\hat{n}^{-\frac{1}{2}}S_{n2,j2}
&= \hat{n}^{-\frac{1}{2}}\sum_{k=1}^{\hat{n}}
\left( \sum_{l=1}^{\hat{n}} \omega_{\h}(\T_k, \T_l)\varepsilon_l \right) \eta_{kj}
= O\!\left(
\bigg(\frac{\log \hat{n}}{ (h^d_{1\nn}h^k_{2\nn} )^{\frac{1}{2}}\hat{n}^{\frac{1}{4}-\frac{1}{r}}}\bigg)^2 \right)
\quad \text{a.s.}
\end{align}
Similarly, using \eqref{32} and \eqref{33eta}, we obtain for $S_{n2,j3}$
\begin{align}
\hat{n}^{-\frac{1}{2}}S_{n2,j3}
&= \hat{n}^{-\frac{1}{2}}\sum_{k=1}^{\hat{n}} \left( \sum_{l=1}^{\hat{n}} \omega_{\h}(\T_k, \T_l)\eta_{lj} \right)
\left( \sum_{l=1}^{\hat{n}} \omega_{\h}(\T_k, \T_l)\varepsilon_l \right)\\
&\leq \hat{n}^{\frac{1}{2}} 
\max_i \left| \sum_{k=1}^{\hat{n}} \omega_{\h}(\T_i, \T_k)\eta_{kj} \right|
\max_i \left| \sum_{l=1}^{\hat{n}} \omega_{\h}(\T_i, \T_l)\varepsilon_l \right| 
= O\!\left(
\bigg(\frac{\log \hat{n}}{ (h^d_{1\nn}h^k_{2\nn} )^{\frac{1}{2}}\hat{n}^{\frac{1}{4}-\frac{1}{r}}}\bigg)^2 \right)
\quad \text{a.s.}
\end{align}
Applying Lemma 3 from \citet{Aneiros-Perez2008Nonparametric} with \( a_{ik} = \tilde{g}_{j}(\T_k)\),  
\( a_{\nn} = h_{2\nn} +  (h^d_{1\nn}h^k_{2\nn})^{-1} \hat{n}^{-\frac{1}{2} + \frac{1}{r}} \log \hat{n} \) (see \ref{35}),  
\( V_k = \varepsilon_{k} \), and \( 0.5 < \gamma < 1 - 9/(4b) \), we obtain
\begin{align}
\hat{n}^{-\frac{1}{2}}S_{n3,j1}
&= \hat{n}^{-\frac{1}{2}}\sum_{k=1}^{\hat{n}} \tilde{g}_{j}(\T_k)\varepsilon_k\\
&= O\!\left( h_{2\nn} \hat{n}^{\frac{1}{r}} \log \hat{n}
+  (h^d_{1\nn}h^k_{2\nn})^{-1} \hat{n}^{\frac{2}{r}-\frac12}\log^2 \hat{n} \right)\label{42}
= O\!\left( (h^4_{2\nn} \hat{n})^{\frac{1}{4}}  \frac{{\log \hat{n} }}{\hat{n}^{\frac{1}{4}-\frac{1}{r}}}
+  \bigg(\frac{\log \hat{n}}{ (h^d_{1\nn}h^k_{2\nn} )^{\frac{1}{2}}\hat{n}^{\frac{1}{4}-\frac{1}{r}}}\bigg)^2 \right)
\quad \text{a.s.}
\end{align}
Applying Lemma 3 from \cite{Aneiros-Perez2008Nonparametric} with  
\( a_{ik} = \sum_{l=1}^{\hat{n}} \omega_{\h}(\T_k, \T_l)\eta_{lj} \),  
\( a_{\nn} =  (h^d_{1\nn}h^k_{2\nn})^{-1} \hat{n}^{-\frac{1}{2}+\frac{1}{r}} \log \hat{n} \) (see \eqref{33eta}),  
\( V_k = \varepsilon_{k} \), we obtain $
\hat{n}^{-\frac{1}{2}}S_{n3,j3}
= \hat{n}^{-\frac{1}{2}}\sum_{k=1}^{\hat{n}} \left( \sum_{l=1}^{\hat{n}} \omega_{\h}(\T_k, \T_l)\eta_{lj} \right) \varepsilon_k
= O\!\left(
\bigg(\frac{\log \hat{n}}{ (h^d_{1\nn}h^k_{2\nn} )^{\frac{1}{2}}\hat{n}^{\frac{1}{4}-\frac{1}{r}}}\bigg)^2 \right)
\quad \text{a.s.}$ 

To show that the obtained orders converge to zero, we use the assumptions \eqref{H0}.
From that, it follows that $\hat{n}^{-\frac12} S_{n1,j1}$, 
$\hat{n}^{-\frac12} S_{n1,j3}$ and $\hat{n}^{-\frac12} S_{n2,j1}$ converge to zero.
Moreover, since $h_{1\nn}, h_{2\nn} \to 0$, for sufficiently large $\hat{n}$ we have
\[\label{hh}
\frac{\log \hat{n}}{ (h^d_{1\nn}h^k_{2\nn} )^{\frac{1}{2}}\hat{n}^{\frac{1}{4}-\frac{1}{r}}}
\le 
\frac{\log \hat{n}}{ (h^d_{1\nn}h^k_{2\nn} )\hat{n}^{\frac{1}{4}-\frac{1}{r}}}.
\]
Hence, by \eqref{H0} and \eqref{hh}, the terms $\hat{n}^{-\frac12} S_{n2,j2}$, 
$\hat{n}^{-\frac12} S_{n2,j3}$ and $\hat{n}^{-\frac12} S_{n3,j3}$ converge to zero. Finally, since $r>4$ (by A3), we have
\(\label{nn}
{\log\hat{n}}/{\hat{n}^{\frac14-\frac1r}} \to 0,
\)
which, together with \eqref{H0} and \eqref{hh}, implies that $\hat{n}^{-\frac12} S_{n1,j2}$, 
and $\hat{n}^{-\frac12} S_{n3,j1}$ converge to zero. Combining the above results, we obtain
\[
S_{n1,j1}+S_{n1,j2}+S_{n1,j3}-S_{n2,j1}-S_{n2,j2}-S_{n2,j3}+S_{n3,j1}+S_{n3,j3} = o(\hat{n}^{\frac{1}{2}}).
\]
Therefore,
\begin{equation}\label{44}
\bm{S}_{n1} - \bm{S}_{n2} + \bm{S}_{n3}
= \sum_{i=1}^{\hat{n}} \bm{\eta}_i \varepsilon_i + o(\hat{n}^\frac{1}{2})
\quad \text{a.s.}
\end{equation}

In the following, we aim to show that
\[\label{45}
\hat{n}^{-1/2} S_{n3,j2}
=
\hat{n}^{-1/2} \sum_{i=1}^{\hat{n}} \bm{\eta}_i \varepsilon_i
\;\xrightarrow{d}\;
\mathcal{N}(0,\mathbf{C}),
\]
where 
\(
\mathbf{C}
=
\lim_{\hat{n}\to\infty}
\hat{n}^{-1}
\mathbb{E}\!\left[\bm{\eta}^{T} \textbf{V}_\varepsilon\bm{\eta}\right],
\)
as defined in Assumption A\ref{covas}.

To achieve this, for each $j$, we apply the main Theorem in \cite{Bolthausen1982} to 
\begin{equation}\label{etaepsilonj}
\left\{ {\eta}_{ij} \varepsilon_i \right\}.
\end{equation}
To this end, it suffices to verify the assumptions of the above theorem for \eqref{etaepsilonj}. 
First, the random field $\{\bm{\eta}_i \varepsilon_i\}$ has zero mean, is stationary, and satisfies the $\alpha$-mixing condition. Moreover, there exists $\delta>0$ such that
\begin{equation}\label{normacot}
\mathbb{E}\!\left[\|\bm{\eta}_i \varepsilon_i\|^{2+\delta}\right] < \infty.
\end{equation}
These properties follow directly from Lemma \ref{lemaultimo} and imply that they hold for each coordinate.

Furthermore, by the definition of $\alpha(m)$ in \eqref{alfa} (which does not impose cardinality restrictions on the index sets) and Assumption A\ref{mixing}, for all $\kappa, \ell \in \mathbb{N}$, $\alpha_{\kappa,\ell}(m) \le \alpha(m) \le C m^{-b}. $ In particular, $\alpha_{1,\infty}(m) \le \alpha(m)$, and hence
\(
\alpha_{1,\infty}(m) = o(m^{-d}).
\)
Moreover,
\(
\alpha_{1,1}(m)^{\frac{\delta}{2+\delta}}
\le C\, m^{-b\frac{\delta}{2+\delta}}.
\)
Thus,
\(
\sum_{m=1}^{\infty} m^{d-1}\,\alpha_{1,1}(m)^{\frac{\delta}{2+\delta}}
\le
C \sum_{m=1}^{\infty} m^{d-1 - b\frac{\delta}{2+\delta}},
\)
which is finite whenever $b\frac{\delta}{2+\delta} > d$. Since $b>4.5d$, one can choose $\delta>0$ (e.g., any $\delta<r-2$, with $r>4$ from the moment assumptions) such that this condition holds. Consequently,
\(
\sum_{m=1}^{\infty} m^{d-1}\,\alpha_{1,1}(m)^{\frac{\delta}{2+\delta}} < \infty.
\)
This verifies condition (b) of the main Theorem in \cite{Bolthausen1982}, while \eqref{normacot} ensures the required moment condition.

We now identify the limiting variance. For each $j=1,\dots,p$, define
\(
Z_\s = \eta_{\s j}\varepsilon_\s,
\)
for $ \s \in \mathbb{Z}^d. $ As $\{\s_1,\dots,\s_{\hat n}\}$ is an enumeration of $\mathcal{D}_\nn$, then $Z_i = Z_{\s_i}$ and
\(
\sum_{i=1}^{\hat n} Z_i = \sum_{\s \in \mathcal{D}_\nn} Z_\s.
\)
By stationarity,
\[
\mathrm{Var}\Big(\sum_{\s \in \mathcal{D}_\n} Z_\s\Big)
=
\sum_{\textbf{v} \in \mathbb{Z}^d} N_{\hat n}(\textbf{v})\,\mathrm{Cov}(Z_0, Z_\textbf{v}),
\]
where $N_{\hat{n}}(\textbf{v}) = \#\{\s\in \mathcal{D}_\nn: \s+\textbf{v} \in \mathcal{D}_\nn\}$ denotes the number of pairs at distance $\textbf{v}$. Dividing by $\hat{n}$ yields
\[
\hat{n}^{-1}
\mathrm{Var}\Big(\sum_{i=1}^{\hat{n}} Z_i\Big)
=
\sum_{\textbf{v} \in \mathbb{Z}^d}
\frac{N_{\hat{n}}(\textbf{v})}{\hat{n}}\,\mathrm{Cov}(Z_0,Z_\textbf{v}),
\]
which follows from rewriting the double sum of covariances in terms of spatial lags. Since $\frac{N_{\hat{n}}(\textbf{v})}{\hat{n}} \to 1$ for each fixed $\textbf{v}$, and the series $\sum_{\textbf{v}} |\mathrm{Cov}(Z_0,Z_\textbf{v})|$ is finite, the dominated convergence theorem implies
\[
\hat{n}^{-1}
\mathrm{Var}\Big(\sum_{i=1}^{\hat{n}} Z_i\Big)
\;\longrightarrow\;
\sum_{\textbf{v} \in \mathbb{Z}^d} \mathrm{Cov}(Z_0,Z_\textbf{v}).
\]

Hence, the limiting variance coincides with the corresponding diagonal entry $\mathbf{C}_{jj}$ of $\mathbf{C}$, and
\[
\hat{n}^{-1/2} \sum_{i=1}^{\hat{n}} \eta_{ij} \varepsilon_i 
\xrightarrow{d} \mathcal{N}(0, \mathbf{C}_{jj}).
\]

To establish the joint convergence
$\hat{n}^{-1/2} S_{n3,j2} \xrightarrow{d} \mathcal{N}(0, \mathbf{C})$,
we apply the Cramér-Wold device. It suffices to show that for every fixed
$\bm{t} \in \mathbb{R}^k \setminus \{0\}$,
\(
\hat{n}^{-1/2} \sum_{i=1}^{\hat{n}} \bm{t}^T \bm{\eta}_i \varepsilon_i 
\xrightarrow{d} \mathcal{N}(0, \bm{t}^T \mathbf{C} \bm{t}).
\)
Since $\tB^T \eta_i \varepsilon_i$ is scalar, this reduces to a scalar problem. 
The moment condition $\mathbb{E}|\bm{t}^T \bm{\eta}_i \varepsilon_i|^{2+\delta} < \infty$ follows from Lemma \ref{lemaultimo}, together with the bound $|\bm{t}^T \bm{\eta}_i \varepsilon_i| \leq \|\bm{t}\| \|\bm{\eta}_i\||\varepsilon_i|$. 
The mixing coefficients are properties of the underlying random field and are therefore unaffected by the projection. Finally, since $\mathbf{C}$ is positive definite by assumption, the limiting variance $\bm{t}^T \mathbf{C} \bm{t} > 0$ for all 
$\bm{t} \neq 0$. Hence the scalar central limit theorem applies, and the result follows by the Cramér-Wold device.

Combining \eqref{44} and \eqref{45}, we obtain
\(\label{lim1_paper}
\hat{n}^{-1/2}(\bm S_{n1}-\bm S_{n2}+\bm S_{n3}) \xrightarrow{d} N(0,\mathbf{C}).
\)
Moreover, from Lemma \ref{lema5},
\begin{equation}\label{B}
\hat{n}^{-1}\tilde{\mathbb{X}}^{T}\tilde{\mathbb{X}} \xrightarrow{\text{a.s.}} \mathbf{B}.
\end{equation}
Therefore, applying Slutsky's theorem to \eqref{31}, it follows that
\[
\sqrt{\hat{n}}\left(\hat{\betaB}_\mathbf{h} - \betaB \right)
=
(\hat{n}^{-1}\tilde{\mathbb{X}}^{T}\tilde{\mathbb{X}})^{-1}
\hat{n}^{-1/2}(\bm S_{n1}-\bm S_{n2}+\bm S_{n3})
\xrightarrow{d}
N\bigl(0,\mathbf{B}^{-1}\mathbf{C}\mathbf{B}^{-1}\bigr),
\]
which establishes the first part of the theorem.

We now address \eqref{convergenciabeta},
where $a_{jj} = \mathbf{A}_{jj}$ and $\mathbf{A} = \mathbf{B}^{-1}\mathbf{C}\mathbf{B}^{-1}$, we consider the decomposition in \eqref{31}. From \eqref{B} and \eqref{44}, it follows that
\begin{align}
\hat{\betaB}_\mathbf{h} - \betaB
&=
\left(\hat{n}^{-1} \tilde{\mathbb{X}}^{T} \tilde{\mathbb{X}} \right)^{-1}
\hat{n}^{-1} (S_{n1} - S_{n2} + S_{n3}) \nonumber\\\label{46_paper}
&=
\left( \mathbf{B}^{-1} + o(1) \right)
\left( \hat{n}^{-1} \sum_{i=1}^{\hat{n}} \bm{\eta}_i \varepsilon_i + o\big(\hat{n}^{-1/2}\big) \right)
\quad \text{a.s.}
\end{align}
Focusing on the $j$-th component and denoting by $\bm{b}_{j\cdot}$ the $j$-th row of $\mathbf{B}^{-1}$, it suffices to analyze
\[
\left(\frac{\hat{n}}{2\log \log \hat{n}}\right)^{1/2}
\hat{n}^{-1}\bm{b}_{j\cdot}\sum_{i=1}^{\hat{n}} \bm{\eta}_i \varepsilon_i
=
\left(\frac{1}{2\hat{n}\log \log \hat{n}}\right)^{-1/2}
\bm{b}_{j\cdot}\sum_{i=1}^{\hat{n}} \bm{\eta}_i \varepsilon_i.
\]
To this end, we apply Theorem 5 from \cite{Oodaira1971} with $V_i = \bm{b}_{j\cdot}\bm{\eta}_i \varepsilon_i$. By Lemma \ref{lemaultimo} (under A\ref{fs2}, A\ref{ryg}, A\ref{mixing} and A\ref{epsilonEta}), and since $\bm{b}_{j\cdot}$ is nonrandom and does not depend on $i$, the sequence $\{V_i\}$ is strictly stationary, $\alpha$-mixing, has zero mean, and satisfies $\mathbb{E}|V_i|^{2+\delta}<\infty$.
On the other hand, by Lemma \ref{lemamixing}, we have
\[
\sum_{n=1}^\infty \alpha(\hat{n})^{\frac{\delta'}{2+\delta'}}
\le 
\sum_{n=1}^\infty \left(\hat{n}^{-\frac{b}{d}}\right)^{\frac{\delta'}{2+\delta'}}.
\]
Moreover, under assumptions~A\ref{mixing}, $b>4.5\,d$, and choosing
$\frac{2}{b-d}<\delta'<\delta<r-2$, the series is convergent.

Let $S_n = \sum_{i=1}^{\hat{n}} V_i$. It remains to verify that the asymptotic variance of $S_n$ is positive.
Since $\bm{\eta}$ and $\bm{\varepsilon}$ are independent (A\ref{epsilonEta}) and $\hope{\bm{\varepsilon}}=\bm 0$, we have $\Var{S_n}
= \Var{\sum_{i=1}^{\hat{n}} \bm{b}_{j\cdot}\bm{\eta}_i\varepsilon_i}
= \bm{b}_{j\cdot}\hope{\bm{\eta}^T V_\varepsilon \bm{\eta}}\bm{b}_{j\cdot}^T.$ Dividing by $\hat{n}$, we obtain $\frac{1}{\hat{n}}\Var{S_n}
=
\bm{b}_{j\cdot}
\left(
\frac{1}{\hat{n}}\hope{\bm{\eta}^T V_{\varepsilon} \bm{\eta}}
\right)
\bm{b}_{j\cdot}^T. $ By hypothesis H7, the limit
\(
\mathbf{C}
=
\lim_{\hat{n}\to\infty}
\frac{1}{\hat{n}}\hope{\bm\eta^T V_\varepsilon \bm\eta}
\)
exists and is positive definite. Consequently,
\(
\lim_{\hat{n}\to\infty}
\frac{1}{\hat{n}}\Var{S_n}
=
\bm{b}_{j\cdot}\mathbf{C}\bm{b}_{j\cdot}^T > 0,
\)
since $\bm{b}_{j\cdot}\neq 0$. Hence, all the conditions of Theorem 5 from \cite{Oodaira1971} are satisfied.

Moreover, using \eqref{varSuma_paper},
\(
s_n^2 \defi \hope{S_{\hat{n}}^2}
= \Var{\sum_{i=1}^{\hat{n}} \bm{b}_{j\cdot} \bm{\eta}_i \varepsilon_i}
= \hat{n}\, a_{jj} \bigl(1 + o(1)\bigr).
\)
Therefore, by Theorem 5 from \cite{Oodaira1971},
\begin{equation}\label{47_paper}
\limsup_{\hat{n} \to \infty} 
\left( \frac{1}{2\hat{n} \log \log \hat{n}} \right)^{1/2}
\left| \sum_{i=1}^{\hat{n}}\bm{b}_{j\cdot} \bm{\eta}_i \varepsilon_i \right|
= (a_{jj})^{1/2}
\quad \text{a.s.}
\end{equation}
Combining \eqref{46_paper} and \eqref{47_paper} completes the proof of the second part of the theorem.
\end{proof}
\begin{proof}[Proof of Theorem \ref{r}]
    Starting from \eqref{BLUPnp} and \eqref{modeloespacial}, we obtain
\begin{equation}
\hat{r}(\tB)
= \hat{r}^*_{\textbf{h}}(\tB)
- \sum_{i=1}^{\hat{n}} \omega_{\textbf{h}}(\tB, \T_i)\bm{X}_i^{\top}(\hat{\betaB}_{\textbf{h}} - \betaB),
\label{eq:decomposition_compact}
\end{equation}
where
\(
\hat{r}^*_{\textbf{h}}(\tB)
= \sum_{i=1}^{\hat{n}} \omega_{\textbf{h}}(\tB, \T_i)\big(r(\T_i) + \varepsilon_i\big).
\)

Under assumptions H1--H5 and by Remark~\ref{hipAnne}, together with Observation~4 of Theorem 3.1 in \cite{DaboNiang2016}, it holds that
\begin{equation}\label{parte1}
\sup_{\tB \in \mathcal{C}}
\big| \hat{r}^*_{\textbf{h}}(\tB) - r(\tB) \big|
= O(h_{2\nn})
+ O\!\left( \sqrt{\frac{\log \hat{n}}{\hat{n} h_{1\nn}^d h_{2\nn}^k}} \right)
\quad \text{a.s.}
\label{eq:rstar_rate}
\end{equation}
Moreover, from \eqref{convergenciabeta}, for each $j=1,\dots,p$,
\[\label{parte2}
\big| \hat{\betaB}_{\textbf{h}j} - \betaB_j \big|
= O\!\left(\sqrt{\frac{\log \log \hat{n}}{\hat{n}}}\right)
\quad \text{a.s.}
\]
Since $p$ is fixed, this implies
\(
\|\hat{\betaB}_{\textbf{h}} - \betaB\|
= O\!\left(\sqrt{{\log \log \hat{n}}/{\hat{n}}}\right)
\) almost sure too.

Therefore, using \eqref{eq:decomposition_compact}, we obtain
\begin{align}
\sup_{\tB \in \mathcal{C}} \big| \hat{r}(\tB) - r(\tB) \big|
&\leq
\sup_{\tB \in \mathcal{C}} \big| \hat{r}^*_{\textbf{h}}(\tB) - r(\tB) \big|
+ \sup_{\tB \in \mathcal{C}} \left|
\sum_{i=1}^{\hat{n}} \omega_{\textbf{h}}(\tB, \T_i)\bm{X}_i^{\top}
(\hat{\betaB}_{\textbf{h}} - \betaB)
\right| \notag \\
&= O(h_{2\nn})
+ O\!\left( \sqrt{\frac{\log \hat{n}}{\hat{n} h_{1\nn}^d h_{2\nn}^k}} \right)
+ O\!\left(\sqrt{\frac{\log \log \hat{n}}{\hat{n}}}\right)
\quad \text{a.s.}
\label{eq:final_uniform_rate}
\end{align}
Since the weights are nonnegative (A1) and satisfy $
\sum_{i=1}^{\hat{n}} \omega_{\textbf{h}}(\tB,\T_i)=1$ it follows, from the triangle inequality that, $\left\|
\sum_{i=1}^{\hat{n}} \omega_{\textbf{h}}(\tB,\T_i) \bm{X}_i
\right\|
\le
\sum_{i=1}^{\hat{n}} \omega_{\textbf{h}}(\tB,\T_i)\|\bm{X}_i\|. $From the definition of the weights in \eqref{pesosFraccion}, 
and since the kernels have compact support, only those indices such that
\(
d(\s,\s_i)\le C h_{1\nn},\)
and
\(d_m(\tB,\T_i)\le C h_{2\nn},
\)
contribute to the sum. Let $\mathcal{I}_{\nn}(\tB)$ denote this active set of indices. Then,
\(
|\mathcal{I}_{\nn}(\tB)| = O\big(\hat{n} h_{1\nn}^d h_{2\nn}^k\big),
\)
almost sure.

Moreover, by Lemma~\ref{lema4}, we have \eqref{cotaPeso} we have that $
\sum_{i=1}^{\hat{n}} \omega_{\textbf{h}}(\tB,\T_i)\|\bm{X}_i\|
=
O\!\left(\frac{1}{\hat{n} h_{1\nn}^d h_{2\nn}^k}\right)
\sum_{i \in \mathcal{I}_{\nn}(\tB)} \|\bm{X}_i\|
\quad \text{a.s.} $ . Since $E\|\bm{X}\| < \infty$ and the sequence is $\alpha$-mixing, the strong law of large numbers (Theorem 2(b) from \cite{Devroye2010}), applied under the same conditions as in \eqref{30graya}, yields $
\frac{1}{|\mathcal{I}_{\nn}(\tB)|}
\sum_{i \in \mathcal{I}_{\nn}(\tB)} \|\bm{X}_i\|
\to
E\|\bm{X}\|
\quad \text{a.s.} $ uniformly in $\tB \in \mathcal{C}$, since the index sets $\mathcal{I}_{\nn}(\tB)$ correspond to local neighborhoods of comparable size over a compact domain under a regular grid design. Hence,
\(
\sum_{i \in \mathcal{I}_{\nn}(\tB)} \|\bm{X}_i\|
=
O\big(|\mathcal{I}_{\nn}(\tB)|\big),
\)
almost sure and  uniformly in $ \tB. $

Recalling that $|\mathcal{I}_{\nn}(\tB)| = O(\hat{n} h_{1\nn}^d h_{2\nn}^k)$ a.s., we obtain $\sup_{\tB \in \mathcal{C}}
\sum_{i=1}^{\hat{n}} \omega_{\textbf{h}}(\tB,\T_i)\|\bm{X}_i\|
=
O(1)
\quad \text{a.s.} $ Combining the previous bound with the convergence rate of $\hat{\betaB}_{\textbf{h}}$ in \eqref{parte2} and the uniform rate in \eqref{parte1}, we obtain
\begin{align}
\sup_{\tB \in \mathcal{C}}\big|\hat{r}_{\textbf{h}}(\tB) - r(\tB)\big|
&\le
\sup_{\tB \in \mathcal{C}}\big|\hat{r}^*_{\textbf{h}}(\tB) - r(\tB)\big|
+ \sup_{\tB \in \mathcal{C}} \sum_{i=1}^{\hat{n}}
\omega_{\textbf{h}}(\tB, \T_i)\|\bm{X}_i\|
\, \|\hat{\betaB}_{\textbf{h}} - \betaB\| \notag \\
&= O(h_{2\nn})
+ O\!\left( \sqrt{\frac{\log \hat{n}}{\hat{n} h_{1\nn}^d h_{2\nn}^k}} \right)
+ O\!\left(\sqrt{\frac{\log \log \hat{n}}{\hat{n}}}\right)
\quad \text{a.s.}
\label{eq:final_rate_full}
\end{align}
Since $
\sqrt{\frac{\log \log \hat{n}}{\hat{n}}}
= o\!\left(
\sqrt{\frac{\log \hat{n}}{\hat{n} h_{1\nn}^d h_{2\nn}^k}}
\right), $ under the imposed assumptions, it follows that
\begin{equation}
\sup_{\tB \in \mathcal{C}}\big|\hat{r}_{\textbf{h}}(\tB) - r(\tB)\big|
=
O(h_{2\nn})
+ O\!\left( \sqrt{\frac{\log \hat{n}}{\hat{n} h_{1\nn}^d h_{2\nn}^k}} \right)
\quad \text{a.s.}
\label{eq:final_rate_simplified}
\end{equation}

This completes the proof.

\end{proof}


\begin{thebibliography}{11}

	
	
	\bibitem[\protect\citeauthoryear{Aneiros-Pérez and Vieu}{Aneiros-Pérez and Vieu}{2008}]{Aneiros-Perez2008Nonparametric}
	Aneiros-Pérez, G. and P.~Vieu (2008).
	\newblock Nonparametric time series prediction: A semi-functional partial linear modeling.
	\newblock {\em J. of Multivariate Analysis\/}~{\em 99}, 834--857.
	
	\bibitem[\protect\citeauthoryear{Basile, Durb{\'a}n, M{\'\i}nguez, Montero, and Mur}{Basile et~al.}{2014}]{basile2014modeling}
	Basile, R., M.~Durb{\'a}n, R.~M{\'\i}nguez, J.~M. Montero, and J.~Mur (2014).
	\newblock Modeling regional economic dynamics: Spatial dependence, spatial heterogeneity and nonlinearities.
	\newblock {\em J. of Econ. Dynamics and Control\/}~{\em 48}, 229--245.
	
	\bibitem[\protect\citeauthoryear{Bauman, Drouet, Fortin, and Dray}{Bauman et~al.}{2018}]{bauman2018optimizing}
	Bauman, D., T.~Drouet, M.-J. Fortin, and S.~Dray (2018).
	\newblock Optimizing the choice of a spatial weighting matrix in eigenvector-based methods.
	\newblock {\em Ecology\/}~{\em 99\/}(10), 2159--2166.
	
	\bibitem[\protect\citeauthoryear{Bolthausen}{Bolthausen}{1982}]{Bolthausen1982}
	Bolthausen, E. (1982).
	\newblock {On the Central Limit Theorem for Stationary Mixing Random Fields}.
	\newblock {\em The Annals of Probability\/}~{\em 10\/}(4), 1047 -- 1050.
	
	\bibitem[\protect\citeauthoryear{Cerqueti, Maranzano, and Mattera}{Cerqueti et~al.}{2025}]{cerqueti2025spatially}
	Cerqueti, R., P.~Maranzano, and R.~Mattera (2025).
	\newblock Spatially-clustered spatial autoregressive models with application to agricultural market concentration in europe.
	\newblock {\em J. of Agricultural, Biological and Environmental Stat.\/}, 1--35.
	
	\bibitem[\protect\citeauthoryear{Dabo-Niang, Ternynck, and Yao}{Dabo-Niang et~al.}{2016}]{DaboNiang2016}
	Dabo-Niang, S., C.~Ternynck, and A.-F. Yao (2016).
	\newblock Nonparametric prediction of spatial multivariate data.
	\newblock {\em J. of Nonparametric Stat.\/}~{\em 28\/}(2), 428--458.
	
	\bibitem[\protect\citeauthoryear{Duncan, White, and Mengersen}{Duncan et~al.}{2017}]{duncan2017spatial}
	Duncan, E.~W., N.~M. White, and K.~Mengersen (2017).
	\newblock Spatial smoothing in bayesian models: a comparison of weights matrix specifications and their impact on inference.
	\newblock {\em Int. J. of health geographics\/}~{\em 16\/}(1), 47.
	
	\bibitem[\protect\citeauthoryear{Fouedjio and Arya}{Fouedjio and Arya}{2024}]{Fouedjio2024_GeoML}
	Fouedjio, F. and E.~Arya (2024).
	\newblock Locally varying geostatistical machine learning for spatial prediction.
	\newblock {\em Artificial Intelligence in Geosciences\/}~{\em 5}, 100081.
	
	\bibitem[\protect\citeauthoryear{Gao, Lu, and Tjøstheim}{Gao et~al.}{2006}]{Gao2006}
	Gao, J., Z.~Lu, and D.~Tjøstheim (2006, June).
	\newblock Estimation in semiparametric spatial regression.
	\newblock {\em The Annals of Stat.\/}~{\em 34\/}(3).
	
	\bibitem[\protect\citeauthoryear{Garc{\'\i}a~Arancibia, Llop~Orzan, and Lovatto}{Garc{\'\i}a~Arancibia et~al.}{2023}]{GarciaArancibia2023NPS}
	Garc{\'\i}a~Arancibia, R., P.~N. Llop~Orzan, and M.~G. Lovatto (2023).
	\newblock Nonparametric prediction for univariate spatial data: methods and applications.
	\newblock {\em Papers in Regional Science\/}.
	
	\bibitem[\protect\citeauthoryear{Hoshino}{Hoshino}{2018}]{hoshino2018semiparametric}
	Hoshino, T. (2018).
	\newblock Semiparametric spatial autoregressive models with endogenous regressors: With an application to crime data.
	\newblock {\em J. of Business \& Econ. Stat.\/}~{\em 36\/}(1), 160--172.
	
	\bibitem[\protect\citeauthoryear{Jenish}{Jenish}{2014}]{Jenish2014}
	Jenish, N. (2014, December).
	\newblock Spatial semiparametric model with endogenous regressors.
	\newblock {\em Econometric Theory\/}~{\em 32\/}(3), 714–739.
	
	\bibitem[\protect\citeauthoryear{Jeong and Koo}{Jeong and Koo}{2025}]{JeongKoo2025_ISPRS}
	Jeong, M. and H.~Koo (2025).
	\newblock Evaluating spatio-temporal kriging with machine learning considering the sources of spatio-temporal variation.
	\newblock {\em ISPRS Int. J. of Geo-Information\/}~{\em 14\/}(6), 224.
	
	\bibitem[\protect\citeauthoryear{Khan, Almazah, EIlahi, Niaz, Al-Rezami, and Zaman}{Khan et~al.}{2023}]{khan2023spatial}
	Khan, M., M.~M. Almazah, A.~EIlahi, R.~Niaz, A.~Al-Rezami, and B.~Zaman (2023).
	\newblock Spatial interpolation of water quality index based on ordinary kriging and universal kriging.
	\newblock {\em Geomatics, Natural Hazards and Risk\/}~{\em 14\/}(1), 2190853.
	
	\bibitem[\protect\citeauthoryear{Kheir, Govind, Nangia, El-Maghraby, Elnashar, Ahmed, Aboelsoud, Gamal, and Feike}{Kheir et~al.}{2025}]{kheir2025hybridization}
	Kheir, A.~M., A.~Govind, V.~Nangia, M.~A. El-Maghraby, A.~Elnashar, M.~Ahmed, H.~Aboelsoud, R.~Gamal, and T.~Feike (2025).
	\newblock Hybridization of process-based models, remote sensing, and machine learning for enhanced spatial predictions of wheat yield and quality.
	\newblock {\em Computers and Electronics in Agriculture\/}~{\em 234}, 110317.
	
	\bibitem[\protect\citeauthoryear{Kiani, Motamedvaziri, Khaleghi, and Ahmadi}{Kiani et~al.}{2025}]{kiani2025spatial}
	Kiani, A., B.~Motamedvaziri, M.~R. Khaleghi, and H.~Ahmadi (2025).
	\newblock Spatial prediction of flood susceptible areas using machine learning methods in the siahkhor watershed of kermanshah province.
	\newblock {\em Earth Science Informatics\/}~{\em 18\/}(1), 20.
	
	\bibitem[\protect\citeauthoryear{LeSage and Pace}{LeSage and Pace}{2009}]{LeSage2009}
	LeSage, J. and R.~K. Pace (2009, January).
	\newblock {\em Introduction to Spatial Econometrics}.
	\newblock Chapman and Hall/CRC.
	
	\bibitem[\protect\citeauthoryear{Lovatto, Garc{\'\i}a~Arancibia, and Llop~Orzan}{Lovatto et~al.}{2022}]{Lovatto:thesis2022}
	Lovatto, M.~G., R.~Garc{\'\i}a~Arancibia, and P.~N. Llop~Orzan (2022).
	\newblock {Kriging semiparamétrico para datos univariados}.
	\newblock Master's thesis, Universidad Nacional de Rosario, Argentina.
	
	\bibitem[\protect\citeauthoryear{Meng, Yang, Yang, Zhang, De~Jesus, Correia, Fazeres-Ferradosa, Macek, Branco, and Zhu}{Meng et~al.}{2024}]{meng2024kriging}
	Meng, D., H.~Yang, S.~Yang, Y.~Zhang, A.~M. De~Jesus, J.~Correia, T.~Fazeres-Ferradosa, W.~Macek, R.~Branco, and S.-P. Zhu (2024).
	\newblock Kriging-assisted hybrid reliability design and optimization of offshore wind turbine support structure based on a portfolio allocation strategy.
	\newblock {\em Ocean Engineering\/}~{\em 295}, 116842.
	
	\bibitem[\protect\citeauthoryear{M{\'\i}nguez, Basile, and Durb{\'a}n}{M{\'\i}nguez et~al.}{2020}]{minguez2020alternative}
	M{\'\i}nguez, R., R.~Basile, and M.~Durb{\'a}n (2020).
	\newblock An alternative semiparametric model for spatial panel data.
	\newblock {\em Stat. Methods \& Applications\/}~{\em 29\/}(4), 669--708.
	
	\bibitem[\protect\citeauthoryear{Montero, M{\'\i}nguez, and Durb{\'a}n}{Montero et~al.}{2012}]{montero2012sar}
	Montero, J., R.~M{\'\i}nguez, and M.~Durb{\'a}n (2012).
	\newblock Sar models with nonparametric spatial trends. a p-spline approach.
	\newblock {\em Estad{\'\i}stica Espa{\~n}ola\/}~{\em 54\/}(177), 89--111.
	
	\bibitem[\protect\citeauthoryear{Oliver}{Oliver}{2003}]{Oliver2003}
	Oliver, D.~S. (2003, August).
	\newblock Gaussian cosimulation: Modelling of the cross-covariance.
	\newblock {\em Mathematical Geology\/}~{\em 35\/}(6), 681–698.
	
	\bibitem[\protect\citeauthoryear{Oodaira and ichi Yoshihara}{Oodaira and ichi Yoshihara}{1971}]{Oodaira1971}
	Oodaira, H. and K.~ichi Yoshihara (1971).
	\newblock The law of the iterated logarithm for stationary processes satisfying mixing conditions.
	\newblock {\em KODAI MATHEMATICAL SEMINAR REPORTS\/}~{\em 23\/}(3), 311--334.
	
	\bibitem[\protect\citeauthoryear{Pang, Wang, Lai, Zhang, Liang, and Song}{Pang et~al.}{2023}]{pang2023enhanced}
	Pang, Y., Y.~Wang, X.~Lai, S.~Zhang, P.~Liang, and X.~Song (2023).
	\newblock Enhanced kriging leave-one-out cross-validation in improving model estimation and optimization.
	\newblock {\em Computer Methods in Appl. Mechanics and Engineering\/}~{\em 414}, 116194.
	
	\bibitem[\protect\citeauthoryear{Robinson}{Robinson}{1988}]{robinson1988}
	Robinson, P.~M. (1988).
	\newblock Root-n-consistent semiparametric regression.
	\newblock {\em Econometrica\/}~{\em 56\/}(4), 931--954.
	
	\bibitem[\protect\citeauthoryear{Shen, LaRue, Fei, and Zhang}{Shen et~al.}{2024}]{Shen2024_HybridML_Geo}
	Shen, L., E.~LaRue, S.~Fei, and H.~Zhang (2024).
	\newblock Spatial prediction of plant invasion using a hybrid of machine learning and geostatistical method.
	\newblock {\em Ecology and Evolution\/}~{\em 14}, e11605.
	
	\bibitem[\protect\citeauthoryear{Stakhovych and Bijmolt}{Stakhovych and Bijmolt}{2009}]{stakhovych2009specification}
	Stakhovych, S. and T.~H. Bijmolt (2009).
	\newblock Specification of spatial models: A simulation study on weights matrices.
	\newblock {\em Papers in Regional Science\/}~{\em 88\/}(2), 389--409.
	
	\bibitem[\protect\citeauthoryear{Tadi{\'c}, Ili{\'c}, Ili{\'c}, Pavlovi{\'c}, and Tadi{\'c}}{Tadi{\'c} et~al.}{2024}]{Tadic2024_RemoteSensing_SIF}
	Tadi{\'c}, J.~M., V.~Ili{\'c}, S.~Ili{\'c}, M.~Pavlovi{\'c}, and V.~Tadi{\'c} (2024).
	\newblock Hybrid machine learning and geostatistical methods for gap filling and predicting solar‐induced fluorescence values.
	\newblock {\em Remote Sensing\/}~{\em 16\/}(10), 1707.
	
	\bibitem[\protect\citeauthoryear{Vagnini, Canal~Vieira, Longo, and Mura}{Vagnini et~al.}{2025}]{vagnini2025regional}
	Vagnini, C., L.~Canal~Vieira, M.~Longo, and M.~Mura (2025).
	\newblock Regional drivers of industrial decarbonisation: a spatial econometric analysis of 238 eu regions between 2008 and 2020.
	\newblock {\em Regional Studies\/}~{\em 59\/}(1), 2380369.
	
	\bibitem[\protect\citeauthoryear{Wadoux, Heuvelink, {de Bruin}, and Brus}{Wadoux et~al.}{2021}]{WADOUX2021}
	Wadoux, A. M.-C., G.~B. Heuvelink, S.~{de Bruin}, and D.~J. Brus (2021).
	\newblock Spatial cross-validation is not the right way to evaluate map accuracy.
	\newblock {\em Ecological Modelling\/}~{\em 457}, 109692.
	
	\bibitem[\protect\citeauthoryear{Walk}{Walk}{2010}]{Devroye2010}
	Walk, H. (2010).
	\newblock Strong laws of large numbers and nonparametric estimation.
	\newblock In L.~Devroye, B.~Karas{\"o}zen, M.~Kohler, and R.~Korn (Eds.), {\em Recent Developments in Appl. Probability and Stat.}, pp.\  1--40. Berlin: Springer.
	
\end{thebibliography}
\end{document}